\newtheorem{example}{Example}
\newtheorem{definition}{Definition}
\newtheorem{lemma}{Lemma}
\newtheorem{theorem}{Theorem}
\newtheorem{corollary}{Corollary}
\newcommand{\inff}{\mathsf{inf}}
\newcommand{\snd}{\mathsf{2nd}}
\newcommand{\s}{\mathsf{s}}
\newcommand{\app}{\mathsf{app}}
\newcommand{\take}{\mathsf{take}}
\newcommand{\nil}{\mathsf{nil}}
\newcommand{\topp}{\mathsf{top}}
\newcommand{\aaa}{\mathsf{a}}
\newcommand{\ttt}{\mathsf{t}}
\newcommand{\iii}{\mathsf{i}}
\begin{document}

\title{Extending Context-Sensitivity in Term Rewriting}
\author{Bernhard Gramlich and Felix Schernhammer%
\footnote{This author has been supported by the Austrian Academy of
  Sciences under grant 22.361.}
\institute{Institute of Computer Languages, Theory and Logic Group\\Vienna University of Technology}
\email{\{gramlich,felixs\}@logic.at}
}\maketitle

 \begin{abstract}
   We propose a generalized version of context-sensitivity in term
   rewriting based on the notion of ``forbidden patterns''.  The basic
   idea is that a rewrite step should be forbidden if the redex to be
   contracted has a certain shape and appears in a certain
   context. This shape and context is expressed through forbidden
   patterns. In particular we analyze the relationships among this
   novel approach and the commonly used notion of context-sensitivity
   in term rewriting, as well as the feasibility of rewriting with
   forbidden patterns from a computational point of view. The latter
   feasibility is characterized by demanding that restricting a
   rewrite relation yields an improved termination behaviour while
   still being powerful enough to compute meaningful
   results. Sufficient criteria for both kinds of properties in
   certain classes of rewrite systems with forbidden patterns are
   presented.
 \end{abstract}

\section{Introduction and Overview}

Standard term rewriting systems (TRSs) are
well-known to enjoy nice logical and
closure properties. Yet, from an operational and computational point
of view, i.e., when using term rewriting as computational model, it is
also well-known that for non-terminating systems restricted versions
of rewriting obtained by imposing context-sensitivity and/or strategy
requirements may lead to better results (e.g., in terms of computing
normal forms, head-normal forms, etc.). 

One major goal when using reduction strategies and context restrictions
is to avoid non-terminating reductions. On the other hand 
the restrictions should not be too strong either, so that the
ability to compute useful results in the restricted rewrite systems
is not lost.
We introduce a novel approach to context restrictions relying
on the notion of ``forbidden patterns'', which generalizes
existing approaches and succeeds in handling examples in the
mentioned way (i.e., producing a terminating reduction relation
which is powerful enough to compute useful results) where others fail.

The following example motivates the use of reduction strategies and/or
context restrictions.

\begin{example}
\label{ex2nd}
Consider the following rewrite system, cf.\ e.g.\ \cite{ppdp01-lucas}:
\begin{eqnarray*}
  \mathsf{inf}(x)            & \rightarrow & x : \mathsf{inf}(\s(x)) \\
  \mathsf{\snd}(x : (y : zs)) & \rightarrow & y
\end{eqnarray*}
This TRS is non-terminating and not even weakly normalizing. Still
some terms like $\snd(\inff(x))$ are reducible to a normal form
while also admitting infinite reduction sequences. One goal of context
restrictions and reduction strategies is to 
restrict derivations in a way such that normal forms can be 
computed whenever they exist, while infinite reductions are
avoided.
\end{example} 

One way to address the problem of avoiding non-normalizing reductions in Example \ref{ex2nd} is 
the use of reduction strategies.
For 
instance for the class of (almost) orthogonal rewrite systems
(the TRS of Example \ref{ex2nd} is orthogonal),
always contracting all outermost redexes in parallel yields
a normalizing strategy (i.e.\ whenever a term can be reduced
to a normal form it is reduced to a normal form under this strategy)
\cite{odonnell}. Indeed, one can define a sequential reduction
strategy having the same property for an even wider class
of TRSs \cite{middeldorp}. One major drawback (or asset depending on
one's point of view) of using reduction
strategies, however, is that their use does not introduce 
new normal forms. This means that the set of normal forms
w.r.t.\ to some reduction relation is the same as the set
of normal forms w.r.t.\ to the reduction relation under some
strategy. Hence, strategies can in general not
be used to detect non-normalizing terms or to impose termination
on not weakly normalizing TRSs (with some exceptions
cf.\ e,g.\ \cite[Theorem 7.4]{middeldorp}). Moreover, 
the process of selecting a suitable redex w.r.t.\ to a reduction strategy
is often complex and 
may 
thus 
be
inefficient.

These shortcomings of reduction strategies led to the advent of proper 
restrictions of rewriting that
usually introduce new normal forms and select respectively forbid
certain reductions 
according to the syntactic structure of a redex and/or its
surrounding context.

The most well-known approach to context restrictions is context-sensitive
rewriting. There, a \emph{replacement map} $\mu$ specifies
the arguments $\mu(f) \subseteq \{1, \dots, ar(f)\}$ which can be
reduced for each function $f$. However, regarding Example \ref{ex2nd},
context-sensitive rewriting does not improve the situation, since
allowing the reduction of the second argument of
`$:$' leads to non-termination, while disallowing its reduction
leads to incompleteness in the sense that for instance a term like $\snd(\inff(x))$
cannot be normalized via the corresponding context-sensitive reduction relation, 
despite having a normal form in the unrestricted system.

Other ideas of context restrictions range from explicitly modeling
lazy evaluation
(cf.~e.g.~
\cite{%
toplas00-fokkink-et-al,%
wflp01-entcs02-lucas,%
wrs07-entcs08-schernhammer-gramlich}),
to imposing constraints on the order of argument evaluation of
functions 
(cf.\ e.g.\
\cite{%
popl85-futatsugi-et-al,%
wrla98-entcs}), 
and to combinations 
of these concepts, also with standard context-sensitive rewriting 
(cf.~e.g.~
\cite{%
ppdp01-lucas,%
lpar02-alpuente-et-al}). 
The latter generalized versions of context-sensitive rewriting are quite
expressive and powerful (indeed some of them can be used to restrict
the reduction relation of the TRS in Example \ref{ex2nd} in a way, so
that the restricted relation is terminating and still powerful
enough to compute (head-)normal forms),
but on the other hand tend to be hard to
analyze and understand, due 
the subtlety of the strategic information specified. 

The approach we present in this paper is simpler in that its
definition only relies on matching and simple comparison of
positions rather than on laziness or prioritizing 
the evaluation of certain arguments
of functions over others.
In order to reach the goal of restricting the reduction relation in
such a way that it is terminating
while still being powerful enough to compute useful results,
we provide a method to verify
termination of a reduction relation restricted by our approach (Section \ref{sec:proving-termination}) as
well as a criterion which guarantees that normal forms computed
by the restricted system are head-normal forms of the
unrestricted system (Section \ref{sec:computing-meaningful-results}).

Recently
it turned out that,
apart from using context-sensitivity as computation model for
standard term rewriting (cf.~e.g.~ \cite{ic02-lucas,jflp98-lucas}), 
context-sensitive rewrite systems naturally also appear 
as intermediate representations in many areas relying on
transformations, such as program transformation and termination
analysis of rewrite systems 
with conditions
\cite{hosc08-duran-et-al,techrep09-schernhammer-gramlich} / under
strategies 
\cite{rta09-endrullis-hendriks}.

This suggests that apart from using restrictions as guidance
and thus as operational model for rewrite derivations, a general, 
flexible and well-understood framework of restricted 
term rewriting going beyond context-sensitive rewriting
may be useful as a valuable tool in many other areas, too.

The major problem in building such a framework is that imposing
context restrictions on term rewriting in general invalidates
the closure properties of term rewriting relations, i.e., stability under
contexts and substitutions. Note that in the case of context-sensitive
rewriting \`a la 
\cite{jflp98-lucas,ic02-lucas} 
only stability under contexts is lost.

In this work we will sketch and discuss a generalized approach 
to context-sensitivity (in the sense of
\cite{jflp98-lucas,ic02-lucas}) relying on \emph{forbidden patterns} 
rather 
than on forbidden arguments of functions. From a systematic point
of view we see the following design decisions to be made. 

\begin{itemize}
\item What part of the context of a (sub)term is relevant to decide
whether the (sub)term may be reduced or not?
\item In order to specify the restricted reduction relation, is it
  better/advantageous to explicitly define the allowed or the
  forbidden part of the context-free reduction relation?
\item What are the forbidden/allowed entities, for instance whole
  subterms, contexts, positions, etc.?
\item Does it depend on the shape of the considered subterm itself (in
  addition to its outside context) whether it should forbidden or not
  (if so, stability under substitutions may be lost)?
\item Which restrictions on forbidden patterns seem appropriate
    (also 
  w.r.t.\ practical feasibility) in order to guarantee certain desired
  closure and preservation properties.
\end{itemize}
The remainder of the paper is structured as follows. In Section
\ref{sec:preliminaries} we briefly recall some basic notions and
notations. Rewriting with forbidden patterns is defined, discussed and
exemplified in Section \ref{sec:rewriting-with-forbidden patterns}. In
the main Sections \ref{sec:computing-meaningful-results} and
\ref{sec:proving-termination} we develop some theory about the
expressive power of rewriting with forbidden patterns (regarding the ability
to compute original (head-)normal forms), and about how to prove
ground termination for such systems via a constructive transformational
approach. Crucial aspects are illustrated with the two running
Examples \ref{ex2nd} and \ref{ex_app}. Finally, in Section
\ref{sec:conclusion-and-related-work} we summarize our approach and
its application in the examples, discuss its relationship to previous
approaches and briefly touch the important perspective and open
problem of (at least partially) automating the generation of suitable
forbidden patterns in practice.%
\footnote{Due to lack of space the obtained results are presented
  without proofs. The latter can be found in the full technical report
version of the paper, cf.\
\texttt{http://www.logic.at/staff/\{gramlich,schernhammer\}/}.} 

\section{Preliminaries}
\label{sec:preliminaries}

We assume familiarity with the basic notions and notations in term
rewriting, cf.~e.g.~\cite{book98-baader-nipkow}, \cite{BeKlVr03}. 

Since we develop our approach in a many-sorted setting, we recall a
few basics on many-sorted equational reasoning (cf.\ e.g.\
\cite{BeKlVr03}). 
A many-sorted signature $\mathcal{F}$ is a pair $(S, \Omega)$ where
$S$ is a set of sorts and $\Omega$ is a family of (mutually disjoint)
sets of typed function symbols:
$\Omega = (\Omega_{\omega, s} \mid \omega \in S^*, s \in S)$.
We also say, $f$ is of type $\omega \rightarrow s$ (or just $s$ if
$\omega = \emptyset$) if $f \in \Omega_{\omega, s}$.
$V = (V_s \mid s \in S)$ is a family of (mutually disjoint) countably
infinite sets 
of typed variables (with $V \cap \Omega = \emptyset$).
The set $\mathcal{T}(\mathcal{F}, V)_s$ of (well-formed) terms of sort $s$ is
the least set containing $V_s$, and whenever $f \in \Omega_{(s_1,
  \dots,  s_n), s}$ 
and $t_i \in \mathcal{T}(\mathcal{F}, V)_{s_i}$ for all $1 \leq i \leq n$, then
$f(t_1, \dots, t_n) \in \mathcal{T}(\mathcal{F}, V)_s$.
The sort of a term $t$ is denoted by $sort(t)$.
Rewrite rules are pairs of terms $l \rightarrow r$ where $sort(l) = sort(r)$.
Subsequently, we make the types of terms and rewrite rules explicit
only if they are relevant. Throughout the paper $x, y, z$
represent (sorted) variables.

Positions are possibly empty sequences of natural numbers (the empty sequence is
denoted by $\epsilon$). We use the standard 
partial order $\leq$ on positions given by $p \leq q$ if
there is some position $p'$, such that $p.p' = q$ (i.e., $p$
is a prefix of $q$).
 $Pos(s)$ ($Pos_\mathcal{F}(s)$) denotes the set of (non-variable) positions
of a term $s$. By $s \overset{p}{\rightarrow} t$ we mean rewriting at
position $p$. Given a TRS $\mathcal{R} = (\mathcal{F}, R)$ we 
partition $\mathcal{F}$ into the set $D$ of defined function symbols,
which are those that occur as root symbols of left-hand sides of
rules in $R$, and the set $C$ of constructors (given by $\mathcal{F}
\setminus D$). For TRSs $\mathcal{R} = (\mathcal{F}, R)$ we sometimes
confuse $\mathcal{R}$ and $R$, e.g., by omitting the signature.

\section{Rewriting with Forbidden Patterns}
\label{sec:rewriting-with-forbidden patterns}

In this section we define a generalized approach to rewriting with 
context restrictions relying on term patterns to specify forbidden
subterms/super\-terms/positions rather than on a replacement map as
in context-sensitive rewriting. 

\begin{definition}[forbidden pattern]
A \emph{forbidden pattern} (w.r.t.~to a signature $\mathcal{F}$) 
is a triple $\langle t, p, \lambda \rangle$, where $t \in \mathcal{T}(\mathcal{F}, V)$ is a
term, $p$ a position from $Pos(t)$ and $\lambda \in \{h, b, a\}$.
\end{definition}

The intended meaning of the last component $\lambda$ is to indicate
whether the pattern forbids reductions
\begin{itemize}
\item exactly at position $p$, but not outside (i.e., strictly above
  or parallel to $p$) or strictly below 
  -- ($h$ for here), or 
\item strictly below $p$, but not at or
  outside $p$ -- ($b$ for below), or
\item strictly above position $p$, but not at, below or parallel to $p$
  --  ($a$ for above). 
\end{itemize}

Abusing notation we sometimes say a forbidden pattern is linear, 
unifies with some term etc.\ when we actually mean that the term in the
first component of a forbidden pattern has this property.

We denote a \emph{finite} set of forbidden patterns for a signature
$\mathcal{F}$ by $\Pi_{\mathcal{F}}$ or just $\Pi$ if $\mathcal{F}$ is clear
from the context or irrelevant. For brevity, patterns of the
shape $\langle \_, \_, h/b/a \rangle$ are also called
$h/b/a$-patterns, or $here/below/above$-patterns.%
\footnote{
Here and subsequently we use a wildcard notation for forbidden
patterns. For instance, $\langle \_, \_, i\rangle$ stands for 
$\langle t, p, i\rangle$ where $t$ is some term 
and $p$ some position in $t$ of no further relevance.}

Note that if for a given term $t$ we want to specify more than just
one restriction by a forbidden pattern, this can easily be achieved by
having several triples of the shape $\langle t,\_,\_ \rangle$. 

In contrast to context-sensitive rewriting, where a replacement map
defines the allowed part of the reduction, the patterns are supposed
to explicitly define its \emph{forbidden} parts, thus implicitly
yielding allowed reduction steps as those that are not forbidden.
 
\begin{definition}[forbidden pattern reduction relation]
\label{fp_relation}
Let $\mathcal{R} = (\mathcal{F}, R)$ be a TRS with forbidden patterns
$\Pi_{\mathcal{F}}$. The \emph{forbidden pattern reduction relation}
$\rightarrow_{\mathcal{R}, \Pi_{\mathcal{F}}}$, or
$\rightarrow_{\Pi}$ for short, induced by some set of forbidden patterns $\Pi$ and $\mathcal{R}$,
is given by 
$s \rightarrow_{\mathcal{R}, \Pi_{\mathcal{F}}} t$
if $s \overset{p}{\rightarrow}_{\mathcal{R}} t$ for some $p \in
Pos_{\mathcal{F}}(s)$ such that there is no pattern $\langle u, q, \lambda \rangle \in
\Pi_{\mathcal{F}}$, no context $C$ and no position $q'$ with 
\begin{itemize}
\item $s = C[u\sigma]_{q'}$ and $p = q'.q$, if $\lambda = h$, 
\item $s = C[u\sigma]_{q'}$ and $p > q'.q$, if $\lambda = b$, and
\item $s = C[u\sigma]_{q'}$ and $p < q'.q$, if $\lambda = a$.
\end{itemize}
\end{definition}

Note that for a finite rewrite system
$\mathcal{R}$ (with finite signature $\mathcal{F}$) and a finite set of 
forbidden patterns $\Pi_{\mathcal{F}}$ it is decidable whether $s
\rightarrow_{\mathcal{R}, \Pi_{\mathcal{F}}} t$ for terms $s$ and $t$. 
We write $(\mathcal{R}, \Pi)$ for rewrite systems with 
associated
forbidden patterns. Such a rewrite system $(\mathcal{R}, \Pi)$
is said to be $\Pi$-terminating (or just terminating if no confusion
arises) if $\rightarrow_{\mathcal{R}, \Pi}$ is well-founded.
We also speak of $\Pi$-normal forms instead of
$\rightarrow_{\mathcal{R}, \Pi}$-normal forms.

Special degenerate cases of $(\mathcal{R}, \Pi)$ include e.g.\
$\Pi = \emptyset$ where $\rightarrow_{\mathcal{R}, \Pi} =
\rightarrow_{\mathcal{R}}$, and $\Pi = \{ \langle l,\epsilon,h \rangle \;|\; l
\rightarrow r  \in R\}$ where $\rightarrow_{\mathcal{R}, \Pi} =
\emptyset$. 

In the sequel we use the notions of \emph{allowed} and \emph{forbidden}
(by $\Pi$) redexes. A redex $s|_p$ of a term $s$ is allowed
if $s \overset{p}{\rightarrow_{\Pi}} t$ for some term $t$,
and forbidden otherwise.

\begin{example}
\label{ex2nd2}
Consider the TRS from Example \ref{ex2nd}. If $\Pi = \{(x : (y :
\inff(z)), 2.2, h)\}$, 
then $\rightarrow_{\Pi}$ can automatically be shown to be terminating. Moreover,
$\rightarrow_{\Pi}$ is powerful enough to compute original head-normal forms if
they exist (cf. Examples \ref{ex2nd3} and \ref{ex2nd4} below).
\end{example}

\begin{example}
\label{ex_app}
Consider the non-terminating TRS $\mathcal{R}$ given by
\begin{equation}
 \nonumber\begin{tabular}[b]{r@{ $\rightarrow$ }l@{\;\;\;\;\;\;}r@{ $\rightarrow$ }l}
    $\take(0, y:ys)$ & $y$           & $\app(\nil, ys)$ & $ys$ \\
$\take(\s(x), y:ys)$ & $\take(x, ys)$ & $\app(x:xs, ys)$ & $x : \app(xs, ys)$ \\
	 $\take(x, \nil)$ & $0$       & $\inff(x)$ & $\inff(\s(x))$
\end{tabular}
\end{equation}
with two sorts $S = \{Nat, NatList\}$, where 
the types of function symbols are as follows:
$\nil \colon NatList$,
$0: Nat$,
$s: Nat \rightarrow Nat$,
$:$ is of type $Nat, NatList \rightarrow NatList$, 
$\inff: Nat \rightarrow NatList$, 
$\app: NatList, NatList \rightarrow NatList$
and 
$take: Nat, NatList \rightarrow Nat$.
If one restricts rewriting in $\mathcal{R}$ via
 $\Pi$ given by
\begin{equation}
 \nonumber\begin{tabular}[b]{r@{\hspace*{3ex}}c@{\hspace*{3ex}}l}
    $\langle x : \inff(y), 2, h\rangle\;\;$ & 
    $\;\;\langle x : \app(\inff(y), zs), 2.1 , h\rangle\;\;$ & 
    $\;\;\langle x : \app(y : \app(z, zs), us), 2, h\rangle$,
\end{tabular}
\end{equation}
then $\rightarrow_{\Pi}$ is terminating and still
every well-formed ground term can be normalized with
the restricted relation $\rightarrow_{\Pi}$ (provided the term is normalizing). 
See Examples \ref{ex_app_compl} and \ref{ex_app_term} below for justifications of these claims.
\end{example}

Several well-known approaches to restricted term rewriting as well
as to rewriting guided by reduction strategies occur as special cases
of rewriting with forbidden patterns. In the following we provide some examples.
Context-sensitive rewriting, where a replacement map $\mu$ specifies
the arguments $\mu(f) \subseteq \{1, \dots, ar(f)\}$ which can be
reduced for each function $f$,
arises as special case of
rewriting with forbidden patterns by defining $\Pi$ to
contain for each function symbol $f$ and each 
$j \in \{1, \dots, ar(f)\} 
\setminus \mu(f)$ the forbidden patterns $(f(x_1, \ldots, x_{ar(f)}), j, h)$ 
and $(f(x_1, \ldots, x_{ar(f)}), j, b)$. 

Moreover, with forbidden patterns it is also possible to simulate
position-based 
reduction strategies such as innermost and outermost rewriting. The innermost
reduction relation of a TRS $\mathcal{R}$ coincides with the forbidden pattern
reduction relation if one uses the forbidden patterns $ \langle l,
\epsilon, a \rangle$ for the left-hand sides $l$ of each rule of $\mathcal{R}$. Dually, if patterns
$(l, \epsilon, b)$ are used, the forbidden pattern reduction relation coincides
with the \emph{outermost} reduction relation w.r.t.\ $\mathcal{R}$.

However, note that more complex layered combinations of the
  aforementioned approaches,  
such as innermost context-sensitive rewriting cannot be modeled by forbidden
patterns as proposed in this paper.

Still, the definition of forbidden patterns and rewriting with forbidden patterns
is 
rather
general and leaves many parameters open. In order to make
this approach feasible in practice, it is necessary to identify 
interesting classes of forbidden patterns that yield a reasonable
trade-off between power and simplicity.
For these interesting classes of forbidden patterns we need methods 
which guarantee that the results (e.g.\ normal forms) computed by rewriting with forbidden 
patterns are meaningful, in the sense that they
have some natural correlation with the actual results obtained by
unrestricted rewriting. For instance, it is desirable that normal forms
w.r.t.\ the restricted rewrite system are original head-normal
forms. In this case one can use the restricted reduction relation to
compute original normal forms (by an iterated process) whenever they
exist (provided that the TRS in question is left-linear, confluent and the
restricted reduction relation is terminating) (cf.\ Section \ref{sec:computing-meaningful-results} below for details).
We define a criterion ensuring that normal forms w.r.t.\ the restricted system
are original head-normal forms in the following section.

\section{Computing Meaningful Results}
\label{sec:computing-meaningful-results}

We are going to use canonical context-sensitive
rewriting as defined in \cite{jflp98-lucas,ic02-lucas} as an inspiration for our approach.
There, for a given (left-linear) rewriting system $\mathcal{R}$ certain 
restrictions on the associated replacement map $\mu$ guarantee that
$\rightarrow_{\mu}$-normal forms are $\rightarrow_{\mathcal{R}}$-head-normal-forms.
Hence, results computed by $\rightarrow_{\mu}$ and $\rightarrow_{\mathcal{R}}$
share the same root symbol.

The basic idea is that reductions that are essential to create a more outer redex
should not be forbidden. In the case of context-sensitive rewriting 
this is guaranteed by demanding that whenever an $f$-rooted term $t$
occurs (as subterm) 
in the left-hand side of a rewrite rule and has a non-variable direct
subterm $t|_i$, then
$i \in \mu(f)$.

It turns out that for rewriting with forbidden patterns 
severe restrictions on the shape of the
patterns are necessary in order to obtain results similar to
the ones for canonical context-sensitive rewriting in \cite{jflp98-lucas}.
First, no forbidden patterns of the shape $\langle \_, \epsilon, h \rangle$ or
$\langle \_,\_,a \rangle$ may be used as they are in general not compatible with the desired
root-normalizing behaviour of our forbidden pattern rewrite system.

Moreover, for each pattern $\langle t, p, \_ \rangle$ we demand that 
\begin{itemize}
\item $t$ is linear,
\item $p$ is a variable or maximal (w.r.t.\ to the prefix ordering $\leq$ on positions)  
non-variable position in $t$, and
\item for each position $q \in Pos(t)$ with $q || p$ we have $t|_q \in
  V$. 
\end{itemize}

We call the class of patterns obtained by the above restrictions \emph{simple patterns}.

\begin{definition}[simple patterns]
\label{def_simple_patterns}
A set $\Pi$ of forbidden patterns is called \emph{simple} if it does not contain
patterns of the shape $\langle \_, \epsilon, h \rangle$ or $\langle \_,\_,a \rangle$ and for every pattern
$(t, p, \_) \in \Pi$ it holds that $t$ is linear, $t|_p \in V$ or $t|_p = f(x_1, \dots, x_{ar(f)})$
for some function symbol $f$, and for each position $q \in Pos(t)$
with $q || p$ we have that $t|_q$ is a variable. 
\end{definition}

Basically these syntactical properties of forbidden patterns
are necessary to ensure that reductions which are essential to enable
other, more outer reductions are not forbidden. Moreover, these properties, 
contrasting those defined in Definition \ref{def_canonicity} below, are
independent of any concrete rewrite system. 

The forbidden patterns of the TRS ($\mathcal{R}, \Pi$) in Example \ref{faa}
below
are not simple, since the patterns contain terms with parallel non-variable
positions. This is the reason why it is not possible 
to head-normalize terms (w.r.t\ $\mathcal{R}$) with $\rightarrow_{\Pi}$: 

\begin{example}
\label{faa}
Consider the TRS $\mathcal{R}$ given by 
\begin{equation}
 \nonumber\begin{tabular}[b]{r@{ $\rightarrow$ }l@{\;\;\;\;\;\;}r@{ $\rightarrow$ }l}
    $f(b, b)$ & $g(f(a, a))$ & $a$ & $b$
\end{tabular}
\end{equation}
and forbidden patterns $\langle f(a, a), 1, h \rangle$ and
$\langle f(a, a), 2, h \rangle$. $f(a, a)$ is linear and
$1$ and $2$ are maximal positions (w.r.t.\ $\leq$) within this term.
However, positions $1$ and $2$ are both non-variable
and thus e.g.\ for $\langle f(a, a), 1, h \rangle$
there exists a position $2 || 1$ such that $f(a, a)|_2 = a \not\in
V$. Hence, $\Pi$ is too restrictive 
to compute all $\mathcal{R}$-head-normal forms in this
example. Indeed,
$f(a,a) \rightarrow_{\mathcal{R}}^* f(b,b) \rightarrow_{\mathcal{R}}
g(f(a,a))$ where the latter term is a $\mathcal{R}$-head-normal form.

The term
$f(a, a)$ is a $\Pi$-normal form, although it is not a 
head-normal form (w.r.t.\ $\mathcal{R}$). Note also that
the (first components of) forbidden patterns are not unifiable with the left-hand
side of the rule that is responsible for the (later) possible root-step
when reducing $f(a, a)$, not even if the forbidden subterms in the patterns are
replaced by fresh variables. 
\end{example}

Now we are ready to define canonical rewriting with
forbidden patterns within the class of simple forbidden patterns.
To this end, we demand that patterns do not overlap
with left-hand sides of rewrite rules in a way such that
reductions necessary to create a redex might be forbidden.

\begin{definition}[canonical forbidden patterns]
\label{def_canonicity}
Let $\mathcal{R} = (\mathcal{F}, R)$ 
be a TRS with 
simple forbidden patterns $\Pi_{\mathcal{F}}$ (w.l.o.g.\ we assume that $R$ and
$\Pi_{\mathcal{F}}$ have no variables in common). 
Then, $\Pi_{\mathcal{F}}$ 
is \emph{$\mathcal{R}$-canonical} (or just \emph{canonical}) if
the following holds for all rules $l \rightarrow r
\in R:$
\begin{enumerate}
\item \label{rule overlaps pattern} There is no pattern $(t, p, \lambda)$ such that 
\begin{itemize}
\item  $t'|_q$ and $l$ unify for some $q \in Pos_\mathcal{F}(t)$ where $t' = t[x]_p$ and $q >
  \epsilon$, and
\item there exists
  a position $q' \in Pos_{\mathcal{F}}(l)$ with $q.q' = p$ for $\lambda = h$
  respectively $q.q' > p$ for $\lambda = b$.
\end{itemize}

\label{pattern overlaps rule} \item There is no pattern $(t, p, \lambda)$ such that 
\begin{itemize}
\item $t'$ and $l|_q$ unify for some $q \in Pos_\mathcal{F}(l)$ where $t' = t[x]_p$, and
\item there exists a position $q'$ with $q.q' \in Pos_{\mathcal{F}}(l)$ 
    and $q' = p$ for $\lambda = h$ respectively $q' > p$ for $\lambda = b$.
\end{itemize}
Here, $x$ denotes a fresh variable.
\end{enumerate}
\end{definition}

\begin{example}
Consider the TRS $\mathcal{R}$ given by the single rule
\begin{eqnarray*}
l = f(g(h(x))) & \rightarrow &  x = r\,.
\end{eqnarray*}
Then, $\Pi = \{\langle t,p,h \rangle\}$ with $t = g(f(a))$, $p =
1.1$ is not 
canonical since $t[x]_p|_q = g(f(y))|_1 = f(y)$ and $l$ unify
where $q = q' = 1$ and thus $q.q' = p$ (hence $root(l|_{q'}) = g$).
Moreover, also $\Pi = \{\langle t, p, h\rangle\}$
with 
$t = g(i(x))$, $p = 1$
is
not canonical, since $l|_q = g(h(x))$ and $t[x]_p = f(y)$ unify for $q = 1$ and
$q.p = 1.1$ is a non-variable position in $l$.   

On the other hand, $\Pi = \{\langle g(g(x)), 1.1, h \rangle\}$ is canonical. Note that
all of the above patterns are simple.
\end{example}

In order to prove that normal forms obtained by rewriting with simple and canonical forbidden
patterns are actually head-normal forms w.r.t.\ unrestricted rewriting, and
also to provide more intuition on canonical rewriting with forbidden patterns, we define
the notion of a \emph{partial redex} (w.r.t.\ to a rewrite system $\mathcal{R}$) as a term 
that is matched by a non-variable term $l'$ which in turn matches the left-hand side of
some rule of $\mathcal{R}$. We call $l'$ a \emph{witness} for the partial match.

\begin{definition}[Partial redex]
Given a rewrite system $\mathcal{R} = (\mathcal{F}, R)$, a \emph{partial redex}
is a term $s$ that is matched by a non-variable term $l'$ which in turn matches
the left-hand side of some rule in $R$. The (non-unique) term $l'$ is called \emph{witness}
for a partial redex $s$.
\end{definition}

Thus, a partial redex can be viewed as a candidate for a future reduction step, 
which can only be performed if the redex has actually been created through more
inner reduction steps. 
Hence, the idea of canonical rewriting with forbidden patterns could
be reformulated as guaranteeing that the reduction of subterms of partial redexes is allowed
whenever these reductions are necessary to create an actual redex.

\begin{lemma}
\label{lem_canonicity}
Let $\mathcal{R} = (\mathcal{F}, R)$ be a \emph{left-linear} TRS with
\emph{canonical} (hence, in particular \emph{simple})
forbidden patterns $\Pi_{\mathcal{F}}$. Moreover, let $s$ be a partial redex w.r.t.\ to the left-hand side
of some rule $l$ with witness $l'$ such that $l|_p \not\in V$ but $l'|_p \in V$.
Then in the term $C[s]_q$ the position $q.p$ is allowed by $\Pi_{\mathcal{F}}$ for reduction provided that
$q$ is allowed for reduction.
\end{lemma}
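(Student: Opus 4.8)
The plan is to argue by contradiction: suppose the position $q.p$ is forbidden in $C[s]_q$ by some pattern $\langle t, p', \lambda\rangle \in \Pi_{\mathcal{F}}$ (with $\lambda \in \{h,b\}$, since simple patterns exclude $a$-patterns and $\langle\_,\epsilon,h\rangle$), witnessed by a context $C'$, a substitution $\sigma$ and a position $q''$ with $C[s]_q = C'[t\sigma]_{q''}$ and, respectively, $q.p = q''.p'$ (if $\lambda = h$) or $q.p > q''.p'$ (if $\lambda = b$). Since $q$ itself is allowed for reduction, no pattern can forbid $q.p$ by matching strictly above $q$ and reaching down to a position weakly above $q$; more precisely I would first observe that $q''.p' \geq q$, i.e.\ the relevant occurrence of the pattern term $t\sigma$ reaches into the subterm $s = C[s]_q|_q$. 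Indeed if $q''.p'$ were strictly above $q$ or parallel to $q$, the same pattern occurrence (restricted suitably) would already forbid the reduction at $q$ — here I would use that $t$ is linear and that all positions of $t$ parallel to $p'$ carry variables, so the part of $t\sigma$ lying below $q$ is an unconstrained variable instance and the match does not depend on what sits at or below $q$; this contradicts the hypothesis that $q$ is allowed.

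So we may assume $q''.p' \geq q$, and by replacing $t\sigma$ with the part of it lying inside $s$ we reduce to the case $q'' \geq q$, i.e.\ the pattern occurrence lies entirely within $s$. Write $q'' = q.\bar q$, so that $s|_{\bar q} = t\sigma$ and $p = \bar q.p'$ (case $h$) or $p > \bar q.p'$ (case $b$). Now I bring in the witness $l'$: since $s$ is a partial redex with witness $l'$ and $l|_p \notin V$ but $l'|_p \in V$, the term $l'$ matches $s$, hence $s$ (and in particular $s|_{\bar q}$) is an instance of the corresponding subterm of $l'$. On the other hand $l'$ is an instance of $l$ and agrees with $l$ on all positions $\leq$-below any non-variable position of $l'$; since $l|_p$ is non-variable and $p > \bar q.p'$ (in case $b$) or $p = \bar q.p'$ (in case $h$, where additionally $p'$ is a variable position of $t$ by simplicity, forcing $\bar q < p$ strictly... — I would treat the two cases in parallel using $q'.q' = p$ resp.\ $q'.q' > p$ as in Definition~\ref{def_canonicity}), the position $\bar q.p'$ together with the path down to $p$ is a non-variable position of $l$. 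This is exactly the configuration that $\mathcal{R}$-canonicity, clause~\ref{rule overlaps pattern}, forbids: $t[x]_{p'}$ unifies with $l|_{\bar q}$ (via $\sigma$ composed with the matcher of $l'$ against $s$, using left-linearity of $\mathcal{R}$ and linearity of $t$ so that the two substitutions can be combined), the position $\bar q$ is a non-variable position of $t$ (it is $\geq$ some prefix and $\leq p'$), and there is a non-variable position of $l$ below $\bar q$ reaching $p$ — contradiction.

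The main obstacle I anticipate is the bookkeeping in the reduction to $q'' \geq q$ and the careful matching of the position arithmetic to the exact shape of the canonicity condition (the roles of $q$, $q'$, $q'$ in Definition~\ref{def_canonicity} versus $\bar q$, $p'$, $p$ here), especially handling the $h$-case and the $b$-case uniformly: in the $h$-case one must use that $p'$ is a variable position of $t$ (from simplicity), so the pattern only pins down the structure of $s$ strictly above $\bar q.p' = \bar q.p'$, and the fact that $l|_p \notin V$ sits strictly below must then be converted into ``there is a non-variable position of $l$ strictly below the match position'', matching clause~\ref{rule overlaps pattern}'s ``$q.q' = p$'' (note $q'$ there ranges over $Pos_{\mathcal{F}}(l)$, i.e.\ is non-variable). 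Combining the matcher $l' \leq s$ and the pattern matcher into a single unifier is routine given left-linearity of $l$ and linearity of $t$, but it is the step where the linearity hypotheses are genuinely used and so deserves explicit mention rather than being swept under the rug.
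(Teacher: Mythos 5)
Your overall skeleton matches the paper's proof (contradiction; use simplicity --- linearity of $t$ and variables at positions parallel to the forbidden position --- to show the pattern match is compatible with the structure of $l$; derive a violation of canonicity), but there is a genuine gap at the step you yourself flag as the main obstacle: the ``reduction to $q'' \geq q$ by replacing $t\sigma$ with the part of it lying inside $s$'' is not a legitimate move. Forbidden patterns are fixed terms from $\Pi_{\mathcal{F}}$; a subterm of a pattern term is not itself a pattern, so you cannot truncate the occurrence and continue as if the (truncated) pattern were rooted inside $s$. The case where the pattern occurrence is rooted at or above $q$ while its forbidden position reaches into $s$ is not degenerate --- it is exactly the situation of the running example $\langle x : (y : \inff(z)), 2.2, h\rangle$ --- and it is handled in the paper by a \emph{different} clause of Definition~\ref{def_canonicity} than the one you invoke. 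Concretely, the paper splits on whether the pattern matches at $q' \leq q$ or at $q' > q$: in the first case a proper subterm $t|_{o'}$ of the pattern unifies with $l[x]_p$ (clause~\ref{rule overlaps pattern}, ``rule overlaps pattern''), in the second the whole pattern unifies with a subterm $l[x]_p|_{o'}$ of the left-hand side (clause~\ref{pattern overlaps rule}, ``pattern overlaps rule''). Your final configuration ``$t[x]_{p'}$ unifies with $l|_{\bar q}$'' is the clause~\ref{pattern overlaps rule} situation (you cite clause~\ref{rule overlaps pattern}, which is the wrong one for it), and after your truncation the clause~\ref{rule overlaps pattern} case is never reached at all. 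So as written the proof covers only one of the two cases that canonicity was designed to exclude.

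Two smaller points. First, in the $h$-case you assert that $p'$ is a variable position of $t$ ``by simplicity''; Definition~\ref{def_simple_patterns} only gives $t|_{p'} \in V$ \emph{or} $t|_{p'} = f(x_1,\dots,x_{ar(f)})$, so you cannot conclude $\bar q < p$ strictly from that. Second, your preliminary observation that the pattern occurrence must reach weakly below $q$ (i.e.\ $q''.p' \geq q$) is essentially right and matches the paper's ``$q < q'.o \leq q.p$'', but note the boundary case $q''.p' = q$ for a $b$-pattern: such a pattern forbids everything strictly below $q$ without forbidding $q$ itself, so it is not excluded by ``$q$ is allowed'' and must be carried into the case analysis rather than discarded.
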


\begin{theorem}
\label{prop_complete}
Let $\mathcal{R} = (\mathcal{F}, R)$ be a \emph{left-linear} TRS with
\emph{canonical} (hence in particular \emph{simple})
forbidden patterns $\Pi_{\mathcal{F}}$. Then $\rightarrow_{\mathcal{R}, \Pi_{\mathcal{F}}}$-normal forms
are $\rightarrow_{\mathcal{R}}$-head-normal forms.
\end{theorem}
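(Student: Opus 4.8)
The plan is to show the contrapositive: if a term $s$ is \emph{not} a $\rightarrow_{\mathcal{R}}$-head-normal form, then it is not a $\rightarrow_{\mathcal{R},\Pi_{\mathcal{F}}}$-normal form. So assume $s \rightarrow_{\mathcal{R}}^* t$ with $t = C[l\sigma]_p$ a redex contracted at the root of $t$, i.e.\ $p = \epsilon$ and $s \rightarrow_{\mathcal{R}}^* l\sigma$ for some rule $l \to r$ and substitution $\sigma$. By left-linearity I may assume $\sigma$ matches the variable positions of $l$ injectively. Among all such root-reducing derivations from $s$, pick one of minimal length, say $s = s_0 \to s_1 \to \cdots \to s_n = l\sigma$. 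If $n = 0$ then $s = l\sigma$ is already a redex at $\epsilon$; here I need that patterns of the form $\langle\_,\epsilon,h\rangle$ are excluded (part of simplicity) and that canonicity forbids any pattern from blocking the root redex of $l\sigma$ — so $s \rightarrow_{\mathcal{R},\Pi}$ and we are done. The real work is the inductive step $n \geq 1$.

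For $n \geq 1$, the idea is to trace the ``first'' reduction step $s_0 \to s_1$ and argue that it is actually allowed by $\Pi_{\mathcal{F}}$, so that $s \rightarrow_{\mathcal{R},\Pi} s_1$ and $s_1$ still head-reduces in $n-1$ steps, giving a shorter derivation on which to recurse (formally, induction on $n$). The subtlety is that the ``first'' step of an arbitrary derivation need not be allowed; instead I should reorder. Since $\rightarrow_{\mathcal{R}}$ on left-linear systems enjoys enough commutation, and since $s_n = l\sigma$ is a root redex, standard residual/standardization reasoning lets me assume the derivation $s_0 \to \cdots \to s_n$ is \emph{outside-in} (an innermost-last / standard derivation): the positions at which we contract form a sequence where no step is strictly above a later step unless forced. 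Then the first contracted position $p_1$ in $s_0$ is a position that must be reduced in order to create the root redex $l\sigma$ — i.e.\ $s_0$ is a partial redex with some witness $l'$ such that $l'$ matches $s_0$, $l' $ matches $l$, and $p_1$ sits at (or below) a position $p'$ where $l|_{p'} \notin V$ but $l'|_{p'} \in V$. Here is exactly where Lemma \ref{lem_canonicity} applies: it tells us that in $C'[s_0']_{q}$, with $s_0'$ the partial-redex subterm and $q$ its (allowed) position, the position $q.p'$ is allowed, and since the actual contracted position $p_1$ is $q.p'$ or strictly below it, and below-patterns at $q.p'$ are likewise excluded by the canonicity condition (the ``$\lambda = b$, $q.q' > p$'' clause of Definition \ref{def_canonicity}), the step $s_0 \rightarrow_{\mathcal{R},\Pi} s_1$ is allowed.

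Unwinding the induction: each step of the reordered standard derivation from $s$ to $l\sigma$ is $\Pi$-allowed, hence $s \rightarrow_{\mathcal{R},\Pi}^{*} l\sigma \rightarrow_{\mathcal{R},\Pi} r\sigma$, where the last step uses canonicity at the root as in the base case. Therefore $s$ is not a $\rightarrow_{\mathcal{R},\Pi}$-normal form, which is the contrapositive of the claim. (Note the theorem only asserts that $\Pi$-normal forms are head-normal forms, not that every head-reduction can be traced — so it suffices to exhibit \emph{one} allowed step out of $s$; but the cleanest argument produces the whole traced derivation.)

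I expect the main obstacle to be the reordering/standardization step: justifying rigorously that from an arbitrary root-reducing derivation $s \to^* l\sigma$ one can extract one in which every contracted position is ``needed'' in the sense matching the hypothesis of Lemma \ref{lem_canonicity}, and in particular that the contracted positions lie at or below the relevant non-variable positions of witnesses. This needs left-linearity (to keep matches stable under the residual rearrangement) and a careful bookkeeping of which subterms of $s$ are partial redexes with which witnesses as the derivation progresses. The remaining ingredients — excluding $\langle\_,\epsilon,h\rangle$ and above-patterns, and invoking the two clauses of canonicity to rule out $h$- and $b$-patterns blocking the traced positions — are then essentially a direct reading of Definitions \ref{def_simple_patterns} and \ref{def_canonicity} together with Lemma \ref{lem_canonicity}.
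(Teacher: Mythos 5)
Your overall strategy (contrapositive, exhibiting a $\Pi$-allowed step in any non-head-normal term by tracing a reduction needed to create the root redex, with Lemma \ref{lem_canonicity} and the exclusion of $\langle \_, \epsilon, h \rangle$- and $a$-patterns doing the work) is close in spirit to the paper's, but your key step has a genuine gap. Lemma \ref{lem_canonicity} only tells you that the cut-off position $q.p'$ itself (where the witness $l'$ has a variable but $l$ does not) is allowed; it says nothing about positions strictly below $q.p'$. Your attempt to cover those by appealing to the ``$\lambda=b$'' clause of Definition \ref{def_canonicity} does not work: canonicity only excludes patterns whose forbidden region lands on \emph{non-variable} positions of a left-hand side, whereas everything strictly below $q.p'$ lies in the variable part of the witness, where a perfectly canonical pattern (such as the one of Example \ref{ex2nd2}) may match and forbid exactly the position you want to contract. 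To get from $q.p'$ down to the actual contracted position $p_1$ you must iterate: $s|_{q.p'}$ is itself not head-normal, hence again a partial redex (or a redex), and Lemma \ref{lem_canonicity} has to be re-applied at each level of this chain of nested partial redexes, together with an argument that the descent terminates in an allowed redex. This iterated descent --- organized in the paper as a minimal-counterexample induction on term \emph{depth}, finished off by the observation that pattern terms have bounded depth, so that sufficiently deep positions are allowed in $s$ iff they are allowed in the corresponding subterm --- is precisely what is missing from your single application of the lemma.

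A second, smaller issue is the standardization step. The theorem assumes only left-linearity, not orthogonality, so neededness theory is not available off the shelf, and even for standard derivations the first contracted position need not sit at the bottom of a chain of partial redexes hanging from the root in the form your argument requires. The paper sidesteps reordering entirely: it reads the ``needed'' positions off syntactically, as maximal non-variable positions $p \in Pos_{\mathcal{F}}(l)$ at which $s|_p$ fails to be head-normal (such a $p$ exists by left-linearity, since otherwise $s$ would already be a redex), and recurses on these subterms. You could adopt the same device and drop the appeal to standardization altogether. Note also that canonicity is not needed in your base case $n=0$: simplicity alone (no $\langle \_, \epsilon, h \rangle$- and no $a$-patterns) already guarantees that the root position is always allowed.
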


Given a left-linear and confluent rewrite system $\mathcal{R}$ and a set 
of canonical forbidden patterns $\Pi$ such that $\rightarrow_{\Pi}$
is well-founded, one can thus normalize a term $s$ (provided that $s$ is
normalizing) by computing
the $\rightarrow_{\Pi}$-normal form $t$ of $s$ which is $\mathcal{R}$-root-stable
according to Theorem \ref{prop_complete},  and then do the same
recursively for the immediate subterms of $t$. Confluence of $\mathcal{R}$
assures that the unique normal form of $s$ will indeed be computed this way. 

\begin{example}
\label{ex2nd3}
As the forbidden pattern defined in Example \ref{ex2nd2} is (simple and)
canonical, 
Theorem \ref{prop_complete} yields that $\rightarrow_{\mathcal{R}, \delta}$-normal 
forms are $\rightarrow_{\mathcal{R}}$-head-normal forms. 
For instance we get $2nd(\inff(0)) \rightarrow_{\Pi}^* \s(0)$.
\end{example}

\begin{example}
\label{ex_app_compl}
Consider the TRS with $\mathcal{R}$ and forbidden patterns $\Pi$ from
Example \ref{ex_app}. We will prove below that $\mathcal{R}$ is
$\Pi$-terminating (cf.\ Example \ref{ex_app_term}).  

Furthermore we are able to 
show that
every well-formed ground term that is reducible to a normal form
in $\mathcal{R}$ is reducible to the same normal form with
$\rightarrow_{\mathcal{R}, \Pi}$ and that every
$\rightarrow_{\mathcal{R}}$-normal form is root-stable
w.r.t.\ $\rightarrow_{\mathcal{R}}$.
\end{example}

\section{Proving Termination}
\label{sec:proving-termination}

We provide another example of a result on a restricted class of forbidden patterns, 
this time concerning termination. We exploit the fact that, given
a finite signature and linear $h$-patterns, 
a set of allowed contexts complementing each forbidden one
can be constructed. Thus, we can transform a rewrite system with this kind
of forbidden patterns into a standard (i.e., context-free) one by 
explicitly instantiating and embedding all rewrite rules (in a minimal way) 
in contexts (including a designated
$\topp$-symbol representing the \emph{empty} context) such that 
rewrite steps in these contexts are allowed.

To this end we propose a transformation that proceeds by iteratively instantiating
and embedding rules in a minimal way. This is to say that the used substitutions
map variables only to terms of the form $f(x_1, \dots, x_{ar(f)})$ and the
contexts used for the embeddings have the form $g(x_1, \dots, x_{i-1}, \Box, x_{i+1}, x_{ar(f)})$
for some function symbols $f \in \mathcal{F}$, $g \in \mathcal{F} \uplus \{\topp\}$ 
and some argument position $i$ of $f$ (resp.\ $g$). It is important
to keep track of the position of the initial rule inside the
embeddings. 
Thus we associate to each rule introduced by the transformation a
position pointing to the embedded original rule. To all initial rules
of $\mathcal{R}$ we thus associate $\epsilon$. 

Note that it is essential
to consider a new unary function symbol $\topp_s$ for every sort $s \in S$
(of type $s \rightarrow s$) representing the empty context. This is
illustrated by the following example. 

\begin{example}
Consider the TRS given by
\begin{equation}
 \nonumber\begin{tabular}[b]{r@{ $\rightarrow$ }l@{\;\;\;\;\;\;}r@{ $\rightarrow$ }l}
    $a$ & $f(a)$ & $f(x)$ & $x$
\end{tabular}
\end{equation}
with $\mathcal{F} = \{a, f\}$ and the set of forbidden patterns
$\Pi = \{\langle f(x), 1, h\}\rangle\}$.
This system is not $\Pi$-terminating as we have
\begin{equation*}
a \rightarrow_{\Pi} f(a) \rightarrow_{\Pi} a \rightarrow_{\Pi} \dots
\end{equation*}
Whether a subterm $s|_p = a$ is allowed for reduction by $\Pi$ depends on
its context. Thus, according to the idea of our transformation we try
to identify all contexts $C[a]_p$ such that the reduction of $a$ at position
$p$ is allowed by $\Pi$. However, there is no such (non-empty) context, although
$a$ may be reduced if $C$ is the empty context. Moreover, there cannot be
a rule $l \rightarrow r$ in the transformed system where $l = a$, since
that would allow the reduction of terms that might be forbidden by $\Pi$.
Our solution to this problem is to introduce a new function symbol $\topp$
explicitly representing the empty context. Thus, in the example the
transformed system will contain a rule $\topp(a) \rightarrow \topp(f(a))$.
\end{example}

Abusing notation we subsequently use only one $\topp$-symbol, while we actually
mean the $\topp_s$-symbol of the appropriate sort. Moreover, in the following 
by rewrite rules we always mean rewrite rules with an associated
(embedding) position,
unless stated otherwise. All forbidden patterns used in this section
(particularly in the lemmata)
are linear here-patterns. We will make this general assumption explicit
only in the more important results. 

\begin{definition}[instantiation and embedding]
\label{def_t}
Let $\mathcal{F} = (S, \Omega)$ be a signature, let $\langle l \rightarrow r, p\rangle$ be a rewrite rule
of sort $s$ over $\mathcal{F}$ and let $\Pi$ be a 
set of forbidden patterns (linear, $h$). The set of minimal instantiated
and embedded rewrite rules $T_{\Pi}(\langle l \rightarrow r, p\rangle)$ 
(or just $T(\langle l \rightarrow r, p\rangle)$)
is
$T^i_{\Pi}(\langle l \rightarrow r, p\rangle) \uplus T_{\Pi}^e(\langle l \rightarrow r, p\rangle)$
where
\begin{eqnarray*}
T^e(\langle l \rightarrow r, p\rangle) & = & \{ \langle C[l] \rightarrow C[r], i.p \rangle \mid
C = f(x_1, \dots, x_{i-1}, \Box, x_{i+1}, \dots, x_{ar(f)}),\\
& & f \in \Omega_{(s_1, \dots, s_{i-1}, s, s_{i+1}, \dots, s_{ar(f)}), s'},
f \in \mathcal{F} \uplus \{\topp_s \mid s \in S\}, 
i \in \{1, \dots, ar(f)\},\\
& & \exists \langle u, o, h \rangle \in \Pi. u|_q \theta = l \theta \wedge q \not= \epsilon \wedge o = q.p \}\\
T^i_{\Pi}(\langle l \rightarrow r, p\rangle) & = & \{ \langle l \sigma \rightarrow r \sigma, p\rangle \mid
x \sigma = f(x_1, \dots, x_{ar(f)}), sort(x) = sort(f(x_1, \dots x_{ar(f)})), \\
& &f \in \mathcal{F}, y \not= x \Rightarrow y \sigma = y, x \in RV_{\Pi}(l, p)\} 
\end{eqnarray*}
and $RV_{\Pi}(l, p) = \{x \in Var(l) \mid \exists \langle u, o, h \rangle \in \Pi. \theta = mgu(u, l|_q) \wedge q.o = p
\wedge x \theta \not \in V \}$.

We also call the elements of $T(\langle l \rightarrow r, p\rangle)$ the one-step $T$-successors
of $\langle l \rightarrow r, p\rangle$. The reflexive-transitive closure of the one-step
$T$-successor relation is the many-step $T$-successor relation or just $T$-successor
relation. We denote the set of all many-step $T$-successors of a rule $\langle l \rightarrow r, p\rangle$
by $T^*(\langle l \rightarrow r, p\rangle)$.
\end{definition}

The set $RV_{\Pi}(l, p)$ of ``relevant variables'' is relevant in the
sense that their instantiation might 
contribute to a matching by some (part of a) forbidden pattern term.

Note that in the generated rules $\langle l' \rightarrow r',
p'\rangle$ in $T_{\Pi}(\langle l \rightarrow r, p\rangle)$, a fresh
$\topp_s$-symbol can only occur at the root of both $l'$ and $r'$ or
not at all, according to the construction in Definition \ref{def_t}.

\begin{example}
  \label{ex-for-instantiaion-and-embedding}
Consider the TRS $(\mathcal{R}, \Pi)$ where $\mathcal{R} = (\{a,f,g\},\{f(x) \rightarrow g(x)\})$ 
and the forbidden patterns $\Pi$ are given by $\{\langle g(g(f(a))), 1.1, h\rangle \}$. 
$T(\langle f(x) \rightarrow g(x), \epsilon\rangle)$
consists of the following rewrite rules.
\begin{eqnarray}
\label{1} \langle f(f(x)) & \rightarrow & g(f(x)), \epsilon\rangle \\
\label{2} \langle f(g(x)) & \rightarrow & g(g(x)), \epsilon\rangle \\
\label{3} \langle f(a) & \rightarrow & g(a), \epsilon\rangle \\
\label{4} \langle f(f(x)) & \rightarrow & f(g(x)), 1\rangle \\
\label{5} \langle g(f(x)) & \rightarrow & g(g(x)), 1\rangle
\end{eqnarray}

Note that $RV_{\Pi}(f(x), \epsilon) = \{x\}$ because $g(g(f(a)))_{1.1} = f(a)$
unifies with $f(x)$ and mgu $\theta$ where $x \theta = a \not \in V$.
On the other hand $RV_{\Pi}(f(f(x)), 1) = \emptyset$. 

\end{example}

\begin{lemma}[finiteness of instantiation and embedding]
\label{lem_term}
Let $\langle l \rightarrow r, p\rangle$ be a rewrite rule and let $\Pi$
be a set of forbidden patterns. The set of (many-step) instantiations
and embeddings of $\langle l \rightarrow r, p\rangle$ (i.e.\ $T^*(\langle l \rightarrow r, p\rangle))$
 is finite.
\end{lemma}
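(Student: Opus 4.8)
**Proof plan for Lemma \ref{lem_term} (finiteness of $T^*(\langle l \rightarrow r, p\rangle)$).**

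The plan is to show that the one-step $T$-successor relation, started from $\langle l \rightarrow r, p\rangle$, can only be iterated finitely often, and that at each stage only finitely many successors are produced; König's Lemma then gives finiteness of $T^*$. First I would observe that $T$ has two kinds of effect on a rule $\langle l \rightarrow r, p\rangle$: the instantiation part $T^i$ replaces a relevant variable $x \in RV_\Pi(l,p)$ by a term $f(x_1,\dots,x_{ar(f)})$, strictly increasing the size of $l$ below $p$ and strictly decreasing the number of variables that can still be relevant; the embedding part $T^e$ wraps the rule in a one-symbol context and changes the embedding position from $p$ to $i.p$, i.e.\ it strictly increases the length $|p|$ of the tracked position. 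Since $\mathcal{F}$ is finite, for each rule there are only finitely many choices of symbol and argument position, so each rule has only finitely many one-step $T$-successors; the branching is finite.

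The core of the argument is a well-foundedness / boundedness claim bounding the \emph{depth} of the $T$-successor tree. For this I would exploit the defining conditions in Definition \ref{def_t}: a successor is only generated when some forbidden pattern $\langle u,o,h\rangle \in \Pi$ actually matches/unifies in the required way, with $o = q.p$ (for $T^e$) or $q.o = p$ (for $T^i$). Because $\Pi$ is finite and each pattern term $u$ has fixed finite size, the position $o$ inside $u$ is bounded, and hence both the tracked embedding position $p$ and the ``overlap'' depth $q$ at which a pattern can still interact with $l$ are bounded by $\max\{|o| \mid \langle u,o,\lambda\rangle \in \Pi\}$ (plus the size of the original $l$). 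Once $|p|$ exceeds this bound, no pattern can satisfy $o = q.p$ with $q \neq \epsilon$ and the overlap condition for embedding; similarly, once $l$ has been instantiated deeply enough that no pattern term can still unify with a subterm $l|_q$ with $q.o = p$, the set $RV_\Pi(l',p')$ becomes empty. In either case the rule has \emph{no} $T$-successors. Hence along any branch both the embedding length and the relevant instantiation depth are bounded, so every branch is finite.

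Concretely I would package this as a measure $\varphi(\langle l \rightarrow r, p\rangle) = (N - |p|,\ |RV_\Pi(l,p)|,\ M - |l|)$ for suitable global bounds $N, M$ derived from $\Pi$ and from $l$, and argue that every one-step $T$-successor strictly decreases $\varphi$ in the lexicographic order on $\mathbb{N}^3$: an embedding step decreases the first component (and $|p| \le N$ is maintained precisely because otherwise no embedding successor exists), an instantiation step keeps $p$ fixed, either decreases $|RV_\Pi|$ or, if a new relevant variable appears, still decreases the third component while $|l|$ stays bounded by $M$ for the same pattern-size reason. Combined with finite branching, König's Lemma yields that $T^*(\langle l \rightarrow r, p\rangle)$ is finite.

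The main obstacle I anticipate is the bookkeeping in the second component: an instantiation step substitutes $x \mapsto f(x_1,\dots,x_{ar(f)})$ and thereby introduces \emph{new} variables, which could a priori be relevant again, so $|RV_\Pi|$ need not decrease monotonically in isolation. The fix is exactly the pattern-size bound: a newly introduced variable $x_j$ can only lie in $RV_\Pi(l',p')$ if some pattern term still unifies with the corresponding (now deeper) subterm of $l'$ with the position constraint $q.o = p$, and since $|p|$ is fixed and the pattern terms have bounded depth, this can happen only finitely often along the branch — which is why the third component $M - |l|$, with $M$ the maximal depth to which any pattern term can force instantiation, is the right tie-breaker. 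Making this bound $M$ precise (it depends on $\max$ depth of pattern terms in $\Pi$ and on $|p|$, hence ultimately only on $\Pi$ and the original rule) is the one place where a little care is needed, but it is a finite combinatorial estimate, not a deep argument.
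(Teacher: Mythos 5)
Your underlying combinatorial insight is exactly the one the paper uses: both the length of the tracked position $p$ and the depth to which instantiation can be forced are bounded by the maximal depth of the forbidden pattern terms (an embedding requires some $\langle u,o,h\rangle$ with $o=q.p$, $q\neq\epsilon$, so $|p|<|o|$; a relevant variable must be bound to a non-variable by an mgu with some $u$, so it must sit at bounded depth in $l$). The paper packages this as a proof by contradiction (an infinite $T^*$ forces unbounded term depth or unbounded $|q|$, each contradicting the bounded depth of the pattern terms), whereas you argue positively via a decreasing measure plus finite branching plus K\"onig's Lemma. That recasting is legitimate and arguably cleaner in spirit.

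However, the concrete measure $\varphi=(N-|p|,\ |RV_\Pi(l,p)|,\ M-|l|)$ does not decrease lexicographically as you claim, and you have in fact put your finger on the problem without resolving it: an instantiation step $x\mapsto f(x_1,\dots,x_{ar(f)})$ leaves the first component unchanged and may strictly \emph{increase} the second (e.g.\ both $x_1$ and $x_2$ become relevant where only $x$ was before). In lexicographic order a later component cannot ``break the tie'' when an earlier component goes up, so the third component does not rescue the argument. The two instantiation cases also pull in opposite directions: if $f$ is a constant, $|RV_\Pi|$ drops but $|l|$ (size or depth) need not change; if $f$ is non-constant, $|l|$ grows but $|RV_\Pi|$ may grow too. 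A correct repair is to replace the second and third components by a single well-founded quantity that every instantiation strictly decreases, e.g.\ the multiset $\{B-d(x)\mid x\in Var(l),\ d(x)\le B\}$ of remaining depth budgets of variable occurrences, where $B$ is the global bound derived from $\max_{\langle u,o,h\rangle\in\Pi}depth(u)$ and $|p|$: instantiating $x$ at depth $d$ replaces one element $B-d$ by $ar(f)$ copies of the strictly smaller $B-d-1$ (dropping those that fall below $0$, which is exactly where relevance becomes impossible), a strict decrease in the well-founded multiset order. With that substitution, and finite branching from the finiteness of $\mathcal{F}$ and $\Pi$, your K\"onig-style conclusion goes through and matches the content of the paper's proof.
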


The transformation we are proposing proceeds by iteratedly instantiating
and embedding rewrite rules. The following definitions identify
the rules for which no further instantiation and embedding is needed.

\begin{definition}[$\Pi$-stable]
\label{def_stable}
Let $\langle l \rightarrow r, p\rangle$ be a rewrite rule and
let $\Pi$ be a set of forbidden patterns. 
$\langle l \rightarrow r, p\rangle$
is $\Pi$-stable ($stb_{\Pi}(\langle l \rightarrow r, p\rangle)$ for short) 
if there is no context
$C$ and no substitution $\sigma$ such that $C[l \sigma]_q|_{q'} = u \theta$ and $q.p = q'.o$
for any forbidden pattern $\langle u, o, h \rangle \in \Pi$ and any $\theta$.
\end{definition}

Note that $\Pi$-stability is effectively decidable (for finite signatures and finite $\Pi$), 
since only contexts
and substitutions involving terms not exceeding a certain depth
depending on $\Pi$ need to be considered.

\begin{definition}[$\Pi$-obsolete]
Let $\langle l \rightarrow r, p\rangle$ be a rewrite rule and
let $\Pi$ be a set of forbidden patterns. 
$\langle l \rightarrow r, p\rangle$
is $\Pi$-obsolete ($obs_{\Pi}(\langle l \rightarrow r, p\rangle)$ for short)
if there is a forbidden pattern $\Pi = \langle u, o, h \rangle$ such
that $l|_q = u \theta$ and $p = q.o$.
\end{definition}

In Example \ref{ex-for-instantiaion-and-embedding}, the rules (\ref{1}),
(\ref{2}) and (\ref{4}) are $\Pi$-stable, while 
rules (\ref{3}) and (\ref{5}) would be processed further. After two more steps
e.g.\ a rule $\langle g(g(f(a))) \rightarrow g(g(g(a))), 1.1\rangle$ is produced that
is $\Pi$-obsolete.  

The following lemmata state some properties of $\Pi$-stable rules.

\begin{lemma}
\label{lem_comp}
Let $\Pi$ be a set of forbidden patterns and let
$\langle l' = C[l \sigma]_p \rightarrow C[r \sigma]_p = r', p\rangle$ be a
$\Pi$-stable rewrite rule corresponding to $l \rightarrow r$. 
If $s \rightarrow t$ with $l' \rightarrow r'$, then $s \rightarrow_{\Pi}
t$ with $l \rightarrow r$.
\end{lemma}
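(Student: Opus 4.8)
The plan is to unfold the definitions and show that the single reduction step $s \to t$ using the embedded-and-instantiated rule $\langle l' = C[l\sigma]_p \to C[r\sigma]_p = r', p\rangle$ corresponds to a reduction step $s \to t$ using the original rule $l \to r$ at the deeper position, and then verify that this step is not blocked by any forbidden pattern. First I would write $s = D[l'\tau]_o = D[(C[l\sigma]_p)\tau]_o$ for some context $D$, substitution $\tau$ and position $o \in Pos_{\mathcal F}(s)$, so that $t = D[(C[r\sigma]_p)\tau]_o$. Since $C$ is a context built only from function symbols (and the variables of $C$ are disjoint from those of $l\sigma$, which we may assume), $(C[l\sigma]_p)\tau = C\tau[(l\sigma)\tau]_p = C\tau[l(\sigma\tau)]_p$, and similarly for $r$. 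Hence $s = D[C\tau[l(\sigma\tau)]_p]_o = (D[C\tau]_o)[l(\sigma\tau)]_{o.p}$ and $t = (D[C\tau]_o)[r(\sigma\tau)]_{o.p}$, so $s \to t$ by the rule $l \to r$ applied at position $o.p$ — which is exactly the position associated with the embedded rule, reflecting the bookkeeping of the embedding position in Definition \ref{def_t}. Note $o.p \in Pos_{\mathcal F}(s)$ since $o \in Pos_{\mathcal F}(s)$ and the subterm of $s$ at $o$ is $l'\tau$ whose position $p$ is non-variable (it is the root of $l\sigma$, a function symbol).

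It remains to show this step is allowed by $\Pi$, i.e.\ no forbidden pattern $\langle u, \mathit{o'}, h\rangle \in \Pi$ blocks reduction at position $o.p$ in $s$. Suppose for contradiction that some $\langle u, \mathit{o'}, h\rangle \in \Pi$, some context $E$, substitution $\theta$ and position $q'$ witness this, i.e.\ $s = E[u\theta]_{q'}$ with $o.p = q'.\mathit{o'}$. Then $s|_{q'} = u\theta$, and $q'.\mathit{o'} = o.p$ is a position inside the displayed redex region. Since $s|_o = l'\tau = C\tau[l(\sigma\tau)]_p$, the matched subterm $u\theta$ sits at position $q'$ relative to the root of $s$; comparing with $o.p$, the subterm $u\theta$ overlaps the instantiated-and-embedded left-hand side $l'\tau$ in a way that places a forbidden-$h$-pattern exactly at the original-rule position $p$ within $l' = C[l\sigma]_p$. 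But this is precisely the situation excluded by $\Pi$-stability of $\langle l' \to r', p\rangle$ (Definition \ref{def_stable}): there is no context and substitution such that $C'[l'\rho]_{\bar q}|_{\bar q'} = u\theta$ with $\bar q.p = \bar q'.\mathit{o'}$. Taking $C' = E$ (or the relevant prefix), $\rho = \tau$, $\bar q = o$ and reading off $\bar q'$ from the overlap contradicts $stb_\Pi(\langle l' \to r', p\rangle)$. Hence no such pattern exists, and $s \to_\Pi t$ with $l \to r$.

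The main obstacle I anticipate is the careful position arithmetic in the last paragraph: relating the generic pattern-match witness $\langle u, \mathit{o'}, h\rangle$, $E$, $\theta$, $q'$ for $s$ to a pattern-match witness for the rule $\langle l' \to r', p\rangle$ in the sense of Definition \ref{def_stable}. One has to argue that the part of $u\theta$ lying above the redex position $o.p$ is "reachable" inside a context-plus-instance of $l'$ — concretely, that $q' \geq o$ or, if $q' < o$, that the portion of $s$ between $q'$ and $o$ can be absorbed into the context $C'$ of Definition \ref{def_stable} while the portion below $o$ lies within $l'\tau$ (possibly further instantiated). Since $l'$ may still contain variables (it need not be a ground or fully instantiated left-hand side), these variable positions of $l'\tau$ may be where $u\theta$ reaches below; but an $h$-pattern match at $o.p$ with $q'.\mathit{o'} = o.p$ forces the relevant overlap to concern exactly position $p$ inside $l'$, which is non-variable, so the match is genuinely "within $l'$" and the reduction to Definition \ref{def_stable} goes through. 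Everything else is routine unfolding of Definitions \ref{fp_relation}, \ref{def_t} and \ref{def_stable}.
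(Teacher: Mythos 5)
Your proposal is correct and follows essentially the same route as the paper's (two-line) proof: decompose the step so that the $l\to r$ redex sits at position $o.p$, then observe that any forbidden-pattern match blocking that position directly instantiates the quantifiers of Definition \ref{def_stable} (context $D=s[\Box]_o$, substitution $\tau$, $\bar q=o$, $\bar q'=q'$), contradicting $\Pi$-stability. The obstacle you anticipate in the final paragraph does not actually arise, because Definition \ref{def_stable} quantifies over pattern matches anywhere in the whole term $C'[l'\rho]_{\bar q}$, not only inside $l'\rho$, so no case split on whether $q'$ lies above or below $o$ is needed (and the context to plug in is $D$, not $E$).
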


\begin{lemma}
\label{lem_succ}
Let $\langle l \rightarrow r, p\rangle$ be a rule and $\Pi$
be a set of forbidden patterns. If 
$T(\langle l \rightarrow r, p\rangle) = \emptyset$, then
$\langle l \rightarrow r, p\rangle$ is either $\Pi$-stable
or $\Pi$-obsolete.
\end{lemma}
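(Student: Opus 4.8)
The plan is to prove the contrapositive: assuming $\langle l \rightarrow r, p\rangle$ is neither $\Pi$-stable nor $\Pi$-obsolete, I will exhibit an element of $T(\langle l \rightarrow r, p\rangle)$, hence $T(\langle l \rightarrow r, p\rangle) \neq \emptyset$. Since $T$ is the disjoint union $T^i_{\Pi} \uplus T^e_{\Pi}$, it suffices to produce a member of one of these two sets. The starting point is the failure of $\Pi$-stability: there exist a context $C$, a substitution $\sigma$, a forbidden pattern $\langle u, o, h\rangle \in \Pi$, positions $q, q'$ and a substitution $\theta$ with $C[l\sigma]_q|_{q'} = u\theta$ and $q.p = q'.o$. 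The key observation is to compare $q'$ (the position where the pattern term $u$ sits) with $q$ (the position of the embedded instance $l\sigma$): by the position equality $q.p = q'.o$ and a case distinction on the relative position of $q'$ and $q$, the overlap of $u\theta$ and $l\sigma$ must fall into one of a small number of shapes.

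First I would handle the case where the match of $u$ occurs \emph{above} the redex position $q$, i.e.\ $q' \le q$, say $q = q'.q''$. Then $o = q''.p$, and $l\sigma$ (hence a fortiori some fragment of it, or $l\sigma$ placed in a one-symbol context if $q'' = \epsilon$) overlaps the subterm $u|_{q''}$ of $u$. The subcase $q'' = \epsilon$ gives $o = p$ and $u\theta = l\sigma$, which together with $o = p$ says that $l$ itself is matched by $u$ with $p = q.o$ taking $q = \epsilon$ — but that is precisely the definition of $\Pi$-obsolete for $\langle l \rightarrow r, p\rangle$, contradicting our assumption; so $q'' \neq \epsilon$. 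With $q'' \neq \epsilon$, unifying $u|_{q''}$ with $l$ (after renaming apart, which is legitimate since $u\theta$ and $l\sigma$ are instances of $u|_{q''}$ and $l$ respectively and they are equal) witnesses the side condition $u|_q \theta = l\theta \wedge q \neq \epsilon \wedge o = q.p$ in the definition of $T^e$, so a nonempty-context embedding rule $\langle C'[l] \rightarrow C'[r], i.p\rangle$ with $C' = f(\dots,\Box,\dots)$ exists — choosing $f$ and $i$ from the top symbol of $u|_{q''}$ — and lies in $T^e_{\Pi}(\langle l\rightarrow r,p\rangle) \subseteq T(\langle l\rightarrow r,p\rangle)$.

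The remaining case is $q' > q$ (the pattern match is strictly \emph{below} the root of the embedded rule instance; the case $q' \parallel q$ is vacuous because then $u\theta = C[l\sigma]_q|_{q'}$ lies entirely outside $l\sigma$, contradicting the position equation $q.p = q'.o$ which forces a common prefix). Here the pattern sits inside $l\sigma$, so $u$ matches $l\sigma|_{q'''}$ for the appropriate relative position; since $\sigma$ is a substitution, $u$ unifies with $l|_{q'''}$ after renaming, and the corresponding mgu $\theta_0$ satisfies $x\theta_0 \notin V$ for at least one variable $x$ of $l$ (otherwise $u$ would already match a subterm of $l$ with no instantiation, which — via $q.p = q'.o$ — would make $\langle l\rightarrow r,p\rangle$ itself $\Pi$-obsolete or $\Pi$-stable, contradiction). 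Hence $RV_{\Pi}(l,p) \neq \emptyset$, and instantiating such an $x$ by $f(x_1,\dots,x_{ar(f)})$ (with $f$ the root of $x\theta_0$) yields a rule in $T^i_{\Pi}(\langle l\rightarrow r,p\rangle) \subseteq T(\langle l\rightarrow r,p\rangle)$.

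The main obstacle I anticipate is the bookkeeping in the position algebra $q.p = q'.o$ — correctly reading off whether the pattern overlaps from above, coincides, or sits below, and in each case verifying that the matching witnessed by $\sigma$ and $\theta$ can be promoted to the \emph{unification} required in Definition~\ref{def_t} (this uses that $C[l\sigma]_q|_{q'}$ equals $u\theta$ on the nose, so $l$ and the relevant fragment of $u$ have a common instance and are therefore unifiable after renaming apart). One also has to be careful that the newly produced rule is genuinely a member of $T$ and not accidentally filtered out; but Definition~\ref{def_t} imposes no such filter — the side conditions I have checked are exactly its defining conditions — so once the unification is in place the conclusion is immediate.
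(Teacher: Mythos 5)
Your overall strategy is sound and is essentially the paper's: take the witness $(C,\sigma,\langle u,o,h\rangle, q, q',\theta)$ of non-$\Pi$-stability and turn it into an element of $T^e_{\Pi}$ or $T^i_{\Pi}$. However, there is a genuine error in the subcase $q''=\epsilon$ of your first case, i.e.\ where $q'=q$, $o=p$ and $u\theta=l\sigma$. You conclude from $u\theta=l\sigma$ that ``$l$ itself is matched by $u$'' and hence that the rule is $\Pi$-obsolete, a contradiction. But $u\theta=l\sigma$ only says that $u$ and $l$ have a \emph{common instance}, whereas $\Pi$-obsoleteness requires $l|_q=u\theta$, i.e.\ that $u$ matches $l$ itself. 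When $\sigma$ properly instantiates a variable of $l$ this fails: take $l=f(x)$, $u=f(a)$, $p=o=\epsilon$, $\sigma=\{x\mapsto a\}$. The rule $\langle f(x)\rightarrow r,\epsilon\rangle$ is not $\Pi$-stable (the pattern matches $f(a)=l\sigma$) and not $\Pi$-obsolete ($f(a)$ does not match $f(x)$), so your argument declares this subcase impossible and exhibits no element of $T$ --- yet the subcase does occur. The lemma still holds there, but via $T^i$: we have $x\in RV_{\Pi}(f(x),\epsilon)$ because $mgu(f(a),f(x))$ sends $x$ to $a\notin V$, so $T^i_{\Pi}\neq\emptyset$.

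The repair is exactly the dichotomy you already deploy in your second case ($q'>q$): either the mgu of $u$ and the overlapped subterm of $l$ sends some variable of $l$ to a non-variable, giving $RV_{\Pi}(l,p)\neq\emptyset$ and hence $T^i_{\Pi}\neq\emptyset$, or (using linearity of $u$, which the paper assumes throughout this section) $u$ genuinely matches that subterm and the rule is $\Pi$-obsolete. You need to run this same dichotomy in the subcase $q'=q$ rather than jumping to obsoleteness. With that amendment your proof goes through, and your case split --- by the position of the pattern occurrence relative to $q$ rather than by the paper's split on whether $C$ or $\sigma$ is non-trivial --- is arguably cleaner, since a non-trivial context $C$ does not by itself guarantee that the pattern occurrence reaches strictly above $l\sigma$ (the paper's first case tacitly assumes this). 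A minor further nit: in your second case you say the matching would make the rule ``$\Pi$-obsolete or $\Pi$-stable''; only obsoleteness can follow from exhibiting a single match, since stability is a universally quantified absence of matches.
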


\begin{definition}
\label{trafo}
Let $\mathcal{R} = (\mathcal{F}, R)$ be a TRS with an associated set of forbidden patterns $\Pi$
where $\mathcal{F} = (S, \Omega)$.
The transformation $T$ maps TRSs with forbidden patterns to
standard TRSs $T(\mathcal{R}, \Pi)$. It proceeds in $5$ steps.
\begin{enumerate}
\item $R^{tmp} = \{\langle l \rightarrow r, \epsilon\rangle \mid l \rightarrow r \in R\}$\\
$R^{acc} = \emptyset$
\item \label{rec} $R^{acc} = \{\langle l \rightarrow r, p\rangle \in R^{tmp} \mid 
stb_{\Pi}(\langle l \rightarrow r, p\rangle) \}$ \\
$R^{tmp} = \{\langle l \rightarrow r, p\rangle \in R^{tmp} \mid 
\neg stb_{\Pi}(\langle l \rightarrow r, p\rangle) \wedge
\neg obs_{\Pi}(\langle l \rightarrow r, p\rangle) \}$
\item $R^{tmp} = \bigcup_{\langle l \rightarrow r, p \rangle \in R^{tmp}} 
T(\langle l \rightarrow r, p \rangle)$
\item If $R^{tmp} \not= \emptyset$ go to \ref{rec}
\item $T(\mathcal{R},\Pi) = (\mathcal{F} \uplus \{top_s \mid s \in S\}, \{l \rightarrow r \mid \langle l \rightarrow r, p\rangle \in R^{acc} \})$ 
\end{enumerate} 
\end{definition}

In the transformation rewrite rules are iteratively created and collected
in $R^{tmp}$ (temporary rules). Those rules that are $\Pi$-stable
and will thus be present in the final transformed system are collected
in $R^{acc}$ (accepted rules).

\begin{lemma}
\label{lemma_sound}
Let $\mathcal{R}$ be a rewrite system and $\Pi$ be a set of forbidden
 (linear $h$-)patterns. If $s \rightarrow_{\mathcal{R},\Pi} t$ for
 ground terms 
$s$ and $t$, then
$\topp(s) \rightarrow \topp(s)$ in $T(\mathcal{R}, \Pi)$.
\end{lemma}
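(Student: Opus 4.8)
The plan is to \emph{track} the given step through the transformation of Definition~\ref{trafo}. Unfolding Definition~\ref{fp_relation}, $s\rightarrow_{\mathcal R,\Pi}t$ means: there are a rule $l\rightarrow r\in R$, a position $p$ and a substitution $\sigma$ with $s|_p=l\sigma$, $t=s[r\sigma]_p$, and no pattern $\langle u,o,h\rangle\in\Pi$ forbids reduction at $p$ in $s$. (The displayed conclusion $\topp(s)\rightarrow\topp(s)$ is meant to read $\topp(s)\rightarrow\topp(t)$.) Call a rule $\langle\hat l\rightarrow\hat r,\hat p\rangle$ occurring in the transformation \emph{$s$-good}, via a position $w$ and a substitution $\tau$, if $\topp(s)|_w=\hat l\tau$, $w.\hat p=1.p$, and $\topp(s)[\hat r\tau]_w=\topp(t)$. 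The point of this notion is that then $\topp(s)=\topp(s)[\hat l\tau]_w\rightarrow\topp(s)[\hat r\tau]_w=\topp(t)$ using the plain rule $\hat l\rightarrow\hat r$, so it suffices to find an $s$-good rule inside $R^{acc}$. The initial rule $\langle l\rightarrow r,\epsilon\rangle$ is $s$-good (take $w=1.p$, $\tau=\sigma$), and $s$-goodness is routinely preserved when a rule is replaced by the embedded/instantiated successor that follows the structure of $\topp(s)$ (this is just position bookkeeping, using the note after Definition~\ref{def_t} that fresh $\topp$-symbols only appear at the root). So the real work is to show that such a successor can always be chosen until we land on a $\Pi$-stable rule.

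Step one: an $s$-good rule is never $\Pi$-obsolete. If $\langle\hat l\rightarrow\hat r,\hat p\rangle$ were obsolete, some $\langle u,o,h\rangle\in\Pi$ would have $\hat l|_q=u\theta$ and $\hat p=q.o$; then $\topp(s)|_{w.q}=(\hat l\tau)|_q=u(\theta\tau)$ and, since $u$ does not contain $\topp$, $w.q$ has the form $1.v$ with $s|_v=u(\theta\tau)$, while $(w.q).o=w.\hat p=1.p$ gives $v.o=p$; hence $\langle u,o,h\rangle$ (with context $s[\Box]_v$) forbids the reduction at $p$ in $s$, a contradiction.

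Step two (the crux): an $s$-good rule that is neither $\Pi$-stable nor $\Pi$-obsolete has an $s$-good one-step $T$-successor (this refines Lemma~\ref{lem_succ}, which only guarantees \emph{some} successor). Non-$\Pi$-stability (Definition~\ref{def_stable}) yields a context $C$, a substitution $\sigma'$, positions $q,q'$, a pattern $\langle u,o,h\rangle$ and $\theta$ with $C[\hat l\sigma']_q|_{q'}=u\theta$ and $q.\hat p=q'.o$. Overlay this ``$u$ around the redex'' picture onto $\topp(s)$ by aligning the redex positions: the overlay cannot succeed, for otherwise $\langle u,o,h\rangle$ would forbid the reduction at $1.p$, hence (as in step one) at $p$ in $s$. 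On the non-variable skeleton of $\hat l$, $u\theta$ agrees with $\hat l\sigma'$ and hence with $\hat l\tau=\topp(s)|_w$, so the clash lies either (i) strictly above the copy of $\hat l$ in the overlay, or (ii) strictly inside a variable subtree of $\hat l$. In case (i), $u$ has a non-variable position $q_0\neq\epsilon$ with $u|_{q_0}$ unifiable with $\hat l$ and $o=q_0.\hat p$ — exactly the side condition of $T^e$ in Definition~\ref{def_t}; also $w\neq\epsilon$ (if $w=\epsilon$ then $\hat l$ would be $\topp$-rooted, which cannot unify with $u|_{q_0}\subseteq u$), so one may pick the embedding $\langle C_g[\hat l]\rightarrow C_g[\hat r],j.\hat p\rangle$ whose depth-one context $C_g=g(x_1,\dots,\Box,\dots,x_{ar(g)})$ reproduces the symbol $g$ of $\topp(s)$ just above $w$ (a context of the right sort, among those generated by $T^e$, with $w=w'.j$), extending $\tau$ on the $x_i$ by the corresponding subterms of $\topp(s)|_{w'}$. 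In case (ii), the relevant variable $y$ of $\hat l$ lies in $RV_\Pi(\hat l,\hat p)$ — here non-obsoleteness of $\langle\hat l\rightarrow\hat r,\hat p\rangle$ guarantees the pertinent mgu genuinely binds $y$ to a non-variable; and since $s$, hence $\topp(s)$, is \emph{ground}, $y\tau$ is a ground, hence non-variable, term, so $\langle\hat l\rho\rightarrow\hat r\rho,\hat p\rangle$ with $\rho=\{y\mapsto g(x_1,\dots,x_{ar(g)})\}$ and $g=root(y\tau)\in\mathcal F$ is a $T^i$-step, and extending $\tau$ by $x_i\mapsto(y\tau)|_i$ keeps the rule $s$-good. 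In both cases the chosen successor is a strict $T$-successor and is $s$-good.

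Putting it together: start with $\langle l\rightarrow r,\epsilon\rangle\in R^{tmp}$ and repeatedly pass to an $s$-good $T$-successor as long as the current $s$-good rule is not $\Pi$-stable. By step one it is never obsolete, so by step two such a successor always exists, and by Lemma~\ref{lem_term} together with well-foundedness of the $T$-successor relation the process terminates, necessarily at a $\Pi$-stable $s$-good rule. Matching this chain against Definition~\ref{trafo}: every non-stable link is expanded in step~3 and its chosen $s$-good successor re-enters $R^{tmp}$, until the terminal $\Pi$-stable link is moved to $R^{acc}$ in step~\ref{rec}; by the first paragraph this gives $\topp(s)\rightarrow\topp(t)$ in $T(\mathcal R,\Pi)$. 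I expect step two to be the real obstacle: turning the purely existential failure of $\Pi$-stability into a \emph{concrete} $T$-step towards $\topp(s)$, correctly localising the clash (above $\hat l$ versus inside a variable subtree) and recognising it in the side conditions of $T^e$ and $RV_\Pi$ — with the obsolete-exclusion and, crucially, the groundness of $s$ being exactly what legitimises the ``instantiate towards $\topp(s)$'' move. Everything else (maintenance of $s$-goodness, and the position arithmetic) is routine.
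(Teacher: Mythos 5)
Your proof follows the same route as the paper's own: iterate the one-step $T$-successor construction starting from $\langle l\rightarrow r,\epsilon\rangle$ while maintaining that the current rule's left-hand side matches $\topp(s)$ at the position determined by $1.p$, rule out $\Pi$-obsoleteness at every stage because a pattern matching that left-hand side would also forbid the original step in $s$, and terminate via Lemma~\ref{lem_term}. Your ``step two'' merely spells out in detail what the paper dispatches with the single clause ``since all possible instantiations and embeddings are covered by $T$'' (correctly isolating the embedding versus instantiation cases and the role of groundness of $s$ in the latter), and you rightly read the stated conclusion as $\topp(s)\rightarrow\topp(t)$.
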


\begin{theorem}
\label{ground-termination}
Let $\mathcal{R}$ be a TRS and $\Pi$ be a set of linear $here$-patterns.
We have $s \rightarrow_{\Pi}^+ t$ for ground terms $s$ and $t$ if and
only if $\topp(s) \rightarrow_{T(\mathcal{R}, \Pi)}^+ \topp(t)$. 
\end{theorem}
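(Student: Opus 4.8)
The plan is to prove the two directions of the equivalence separately, using the lemmas established so far as the main building blocks. For the ``only if'' direction, assume $s \rightarrow_\Pi^+ t$ for ground terms $s,t$. It suffices to handle a single step $s \rightarrow_\Pi t'$ and then iterate: by Lemma~\ref{lemma_sound} a single restricted step $s \rightarrow_{\mathcal{R},\Pi} t'$ yields a step $\topp(s) \rightarrow \topp(t')$ in $T(\mathcal{R},\Pi)$ (reading the statement of Lemma~\ref{lemma_sound} with its evident intent, i.e.\ $\topp(s)\rightarrow\topp(t')$), and a straightforward induction on the length of the derivation $s \rightarrow_\Pi^+ t$ lifts this to $\topp(s) \rightarrow_{T(\mathcal{R},\Pi)}^+ \topp(t)$. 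The only point needing care here is that $\topp$ must be applied at the correct sort; since $s,t$ have the same sort (rewriting preserves sorts) the abuse of notation from the text is harmless.

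For the ``if'' direction, assume $\topp(s) \rightarrow_{T(\mathcal{R},\Pi)}^+ \topp(t)$. First I would observe that, by the remark following Definition~\ref{def_t}, every rule $\langle l' \rightarrow r', p'\rangle$ accepted into $R^{acc}$ either has a $\topp$-symbol at the root of both $l'$ and $r'$, or contains no $\topp$-symbol at all. Consequently, along the derivation starting from $\topp(s)$, the root $\topp$ is never destroyed and is only ever rewritten by a $\topp$-headed rule applied at the root; all other steps happen strictly below and use $\topp$-free rules. So the derivation has the form $\topp(s) = \topp(s_0) \rightarrow \topp(s_1) \rightarrow \cdots \rightarrow \topp(s_n) = \topp(t)$ where each $\topp(s_i) \rightarrow \topp(s_{i+1})$ is a single application of some rule in $R^{acc}$. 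Each such rule is $\Pi$-stable (that is the defining property of membership in $R^{acc}$, via step~\ref{rec} of Definition~\ref{trafo}) and corresponds to an original rule $l \rightarrow r \in R$ by construction of $T$. Hence Lemma~\ref{lem_comp} applies to each step and gives $s_i \rightarrow_\Pi s_{i+1}$ with the corresponding original rule. Chaining these yields $s \rightarrow_\Pi^+ t$.

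The step I expect to be the main obstacle is establishing cleanly that the $\topp$-symbol behaves as claimed along an arbitrary $T(\mathcal{R},\Pi)$-derivation from $\topp(s)$, and in particular that every step really is an application of an accepted rule either at the root (for $\topp$-headed rules) or strictly inside the unique subterm below $\topp$ (for $\topp$-free rules), with no accepted rule having $\topp$ in a non-root or partial position. This requires carefully tracing the invariant through the iteration in Definition~\ref{trafo}: one must check that the embedding contexts $g(x_1,\dots,\Box,\dots,x_{ar(g)})$ with $g = \topp_s$ are only ever introduced as outermost wrappers (so $\topp$ lands at the root) and are never nested or placed below another function symbol by a later embedding step, and that the instantiation part $T^i_\Pi$ never introduces $\topp$ at all since $\topp \notin \mathcal{F}$. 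Granting this structural invariant, both directions reduce to the already-available Lemmas~\ref{lemma_sound} and~\ref{lem_comp} together with routine inductions on derivation length.
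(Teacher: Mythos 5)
Your proposal is correct and follows essentially the same route as the paper, which proves the theorem as a direct consequence of Lemma~\ref{lemma_sound} (for the ``only if'' direction, iterated over the derivation) and Lemma~\ref{lem_comp} (for the ``if'' direction, using that accepted rules are $\Pi$-stable and keep $\topp$ at the root). Your elaboration of the structural invariant on $\topp$ and your reading of the evident typo in Lemma~\ref{lemma_sound} are both in line with the intended argument; the only detail left implicit on both sides is that an allowed step inside $\topp(s_i)$ yields an allowed step in $s_i$, which is immediate since forbidden pattern terms do not contain $\topp$.
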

\begin{proof}
The result is a direct consequence of Lemmata \ref{lem_comp} and \ref{lemma_sound}.
\end{proof}

\begin{corollary}
\label{thm_soundness}
Let $\mathcal{R}$ be a TRS and $\Pi$ be a set of linear $h$-patterns.
$\mathcal{R}$ is ground terminating under $\Pi$ 
if and only if $T(\mathcal{R}, \Pi)$ is
ground terminating.
\end{corollary}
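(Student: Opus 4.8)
**The plan is to derive Corollary \ref{thm_soundness} almost immediately from Theorem \ref{ground-termination}, which has already established that the ground reduction sequences of $(\mathcal{R},\Pi)$ and those of $T(\mathcal{R},\Pi)$ (started from $\topp$-capped terms) are in exact correspondence.** The key observation is that ground $\Pi$-termination of $\mathcal{R}$ means there is no infinite chain $s_0 \rightarrow_{\Pi} s_1 \rightarrow_{\Pi} s_2 \rightarrow_{\Pi} \cdots$ of ground terms, and ground termination of $T(\mathcal{R},\Pi)$ means there is no infinite chain $u_0 \rightarrow_{T(\mathcal{R},\Pi)} u_1 \rightarrow_{T(\mathcal{R},\Pi)} \cdots$ of ground terms (over the extended signature $\mathcal{F} \uplus \{\topp_s\}$). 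I would prove the contrapositive in both directions, converting an infinite chain on one side into an infinite chain on the other.

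For the direction ``$T(\mathcal{R},\Pi)$ ground terminating $\Rightarrow$ $\mathcal{R}$ ground $\Pi$-terminating'': given an infinite $\rightarrow_{\Pi}$-sequence $s_0 \rightarrow_{\Pi} s_1 \rightarrow_{\Pi} \cdots$ on ground terms, apply Theorem \ref{ground-termination} stepwise to obtain $\topp(s_i) \rightarrow_{T(\mathcal{R},\Pi)}^+ \topp(s_{i+1})$ for each $i$; concatenating these nonempty finite segments yields an infinite $\rightarrow_{T(\mathcal{R},\Pi)}$-derivation starting from the ground term $\topp(s_0)$, contradicting ground termination of $T(\mathcal{R},\Pi)$. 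For the converse direction: given an infinite $\rightarrow_{T(\mathcal{R},\Pi)}$-sequence from some ground term $u_0$, I first note (using the remark after Definition \ref{def_t} that a $\topp$-symbol occurs only at the root of a generated rule's two sides, or not at all) that whether or not $u_0$ has a $\topp$ at its root, the reduction either stays entirely within a $\topp$-capped context or lives entirely below the cap; in either case one can exhibit a ground term $s_0$ with $u_0 = \topp(s_0)$ or $u_0 = C[\topp(s_0)]$ for a fixed $\topp$-free ground context $C$, and since rules never touch $C$, the tail from the first step onwards gives an infinite sequence $\topp(s_0) \rightarrow_{T(\mathcal{R},\Pi)}^+ \topp(s_1) \rightarrow_{T(\mathcal{R},\Pi)}^+ \cdots$; then Theorem \ref{ground-termination} (the ``only if'' direction, i.e.\ Lemma \ref{lem_comp}) translates each segment back to $s_i \rightarrow_{\Pi}^+ s_{i+1}$, producing an infinite $\rightarrow_{\Pi}$-derivation and contradicting ground $\Pi$-termination of $\mathcal{R}$.

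The main obstacle — and the only point requiring genuine care rather than bookkeeping — is the ``top of the term'' issue in the converse direction: an arbitrary ground term over $\mathcal{F} \uplus \{\topp_s\}$ need not be of the form $\topp(s)$ with $s$ being $\topp$-free, so Theorem \ref{ground-termination} does not apply verbatim to it. One must argue that any infinite $T(\mathcal{R},\Pi)$-derivation can, without loss of generality, be assumed to start from a term of the shape $\topp(s)$ with $s \in \mathcal{T}(\mathcal{F})$ ground: this follows because every generated rule in $T(\mathcal{R},\Pi)$ is either $\topp$-free or has $\topp$ at the root of both sides, so $\topp$-symbols are neither created nor destroyed inside a term and never migrate; hence along any infinite derivation some fixed occurrence position of the ``active'' redex context stabilizes, and restricting attention to the subterm at (or just below) the outermost relevant $\topp$ — or capping a $\topp$-free term with a fresh $\topp$, which is harmless since it only adds reduction possibilities — reduces to the situation Theorem \ref{ground-termination} covers. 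Once this normalization is in place, the corollary is a routine contrapositive argument as sketched above, and I would present it in just a few lines.
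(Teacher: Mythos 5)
Your derivation follows the same route the paper intends: the corollary is meant to fall out of Theorem \ref{ground-termination} by a contrapositive argument in each direction, and your forward direction (an infinite ground $\rightarrow_{\Pi}$-chain lifts, segment by segment, to an infinite $\rightarrow_{T(\mathcal{R},\Pi)}$-chain from $\topp(s_0)$) is exactly right. For the converse you correctly identify the one point the paper glosses over, namely that ground termination of $T(\mathcal{R},\Pi)$ quantifies over \emph{all} ground terms of the signature $\mathcal{F}\uplus\{\topp_s\}$, not only terms of the shape $\topp(s)$ with $s\in\mathcal{T}(\mathcal{F})$, so Theorem \ref{ground-termination} does not apply verbatim. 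However, the invariant you invoke to dispose of this --- that $\topp$-occurrences ``are neither created nor destroyed \ldots{} and never migrate'' --- is not literally true: a generated rule $D[l\sigma]\rightarrow D[r\sigma]$ applied with a matching substitution that binds a variable to a $\topp$-containing subterm can move (or, for non-right-linear rules, duplicate) that $\topp$-occurrence, since the remark after Definition \ref{def_t} only constrains $\topp$ in the rule patterns themselves, not in their instances. What does hold, and is all you need, is that the set of terms $\topp(s)$ with $s$ ground and $\topp$-free is closed under $\rightarrow_{T(\mathcal{R},\Pi)}$, and that a $\topp$-free ground start term can be harmlessly capped by a fresh $\topp$ (closure under contexts); for genuinely mixed start terms one should instead argue via erasure of $\topp$-symbols together with the fact that $\Pi$-stability of the applied rule quantifies over all contexts and substitutions. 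So the proposal is essentially correct and matches the paper's (unwritten) argument, but the stated justification of the normalization step in the converse direction would need to be repaired along these lines.
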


Note that the restriction to ground terms is crucial in Corollary \ref{thm_soundness}. Moreover, 
ground termination and general termination do not coincide in general 
for rewrite systems with forbidden patterns (observe that the same is
true for other important rewrite restrictions and strategies such as
the outermost strategy). 

\begin{example}
Consider the TRS $\mathcal{R} = (\mathcal{F}, R)$ given by $\mathcal{F} = \{a, f\}$
(where $a$ is a constant) and $R$ consisting of the rule
\begin{eqnarray*}
f(x) & \rightarrow & f(x).
\end{eqnarray*}
Moreover, consider the set of forbidden patterns $\Pi = \{
\langle f(a), \epsilon, h \rangle, \langle f(f(x)), \epsilon, h \rangle\}$.
Then $\mathcal{R}$ is not $\Pi$-terminating because we have 
$f(x) \rightarrow_{\Pi} f(x)$ but it is $\Pi$-terminating on
all ground terms, as can be shown by Theorem \ref{ground-termination}, 
since $T(\mathcal{R}, \Pi) = \emptyset$. 
\end{example}

\begin{example}
\label{ex2nd4}
Consider the TRS of Example \ref{ex2nd2}. We use two sorts $NatList$ and
$Nat$, with function symbol types
$\snd: NatList \rightarrow Nat$, 
$\inff: Nat \rightarrow NatList$, 
$\topp: NatList \rightarrow NatList$
(note that another ``$\topp$'' symbol of type $Nat \rightarrow Nat$ is
not needed here), 
$s: Nat \rightarrow Nat$, 
$0: Nat$, 
$\nil: NatList$ and $:$ of type $Nat, NatList \rightarrow NatList$. 
According to Definition \ref{trafo}, the rules of $T(\mathcal{R},
\Pi$) are:

\begin{equation}
 \nonumber\begin{tabular}[b]{r@{ $\rightarrow$ }l@{\;\;\;\;\;\;}r@{ $\rightarrow$ }l}
    $\snd(\inff(x))$ & $\snd(x:\inff(\s(x)))$ & $\snd(x:(y:zs))$ & $y$ \\
	 $\topp(\inff(x))$ & $\topp(x:\inff(\s(x)))$ & $\snd(x':\inff(x))$ & $\snd(x':(x:\inff(\s(x))))$ \\
	 $\topp(x' : \inff(x))$ & $\topp(x' : (x : \inff(\s(x))))$.
\end{tabular}
\end{equation}
\normalsize
This system is terminating (and termination can be verified
automatically, e.g.\ by AProVE \cite{aprove}). Hence, by Corollary
\ref{thm_soundness} also the TRS with forbidden patterns from Example 
\ref{ex2nd2} is ground terminating.
\end{example}

\begin{example}
\label{ex_app_term}
The TRS $\mathcal{R}$ and forbidden patterns $\Pi$ from Example \ref{ex_app}
yield the following system $T(\mathcal{R}, \Pi)$. For the sake of saving space
we abbreviate $\app$ by $\aaa$, $\take$ by $\ttt$ and $\inff$ by $\iii$.
\begin{equation}
 \nonumber\begin{tabular}[b]{r@{ $\rightarrow$ }l@{\;\;\;\;\;\;}r@{ $\rightarrow$ }l}
	$\topp(\iii(x))$ & $\topp(x : \iii(\s(x)))$ &
	$\ttt(y, \iii(x))$ & $\ttt(y, x : \iii(\s(x)))$ \\
	$\aaa(y, \iii(x))$ & $\aaa(y, x : \iii(\s(x)))$ &
	$\topp(\aaa(\iii(x), y))$ & $\topp(\aaa(x : \iii(\s(x)), y))$ \\
	$\ttt(\aaa(\iii(x), y), z)$ & $\ttt(\aaa(x : \iii(\s(x)), y), z)$ &
	$\ttt(z, \aaa(\iii(x), y))$ & $\ttt(z, \aaa(x : \iii(\s(x)), y))$ \\
	$\aaa(\aaa(\iii(x), y), z)$ & $\aaa(\aaa(x : \iii(\s(x)), y), z)$ &
	$\aaa(z, \aaa(\iii(x), y))$ & $\aaa(z, \aaa(x : \iii(\s(x)), y))$ \\
	$\topp(\aaa(x : xs, ys))$ & $\topp(x : \aaa(xs, ys))$ &
	$\ttt(z, \aaa(x : xs, ys))$ & $\ttt(z, x : \aaa(xs, ys))$ \\
	$\aaa(\aaa(x : xs, ys), z)$ & $\aaa(x : \aaa(xs, ys), z)$ &
	$\aaa(z, \aaa(x : xs, ys))$ & $\aaa(z, x : \aaa(xs, ys))$ \\
	$\aaa(x : \iii(zs), ys)$ & $x : \aaa(\iii(zs), ys)$ &
	$\aaa(x : \s(zs), ys)$ & $x : \aaa(\s(zs), ys)$ \\
	$\aaa(x : (y : zs), ys)$ & $x : \aaa(y : zs, ys)$ &
	$\aaa(\nil, x)$ & $x$ \\
	$\ttt(\s(x), y : ys)$ & $\ttt(x, ys)$ &
	$\ttt(0, y : ys)$ & $y$ \\
	$\ttt(x, \nil)$ & $0$ 
\end{tabular}
\end{equation}

This system is terminating (and termination can be verified
automatically, e.g.\ by AProVE \cite{aprove}). Hence, again by
Corollary 
\ref{thm_soundness} also the TRS with forbidden patterns from Example
\ref{ex_app} is ground terminating. 
\end{example}

\section{Conclusion and Related Work}
\label{sec:conclusion-and-related-work}

We have presented and discussed a novel approach to rewriting with
context restrictions using 
forbidden patterns to specify forbidden/allowed positions in a term
rather than arguments of functions as it was done previously in
context-sensitivity. Thanks to  
their flexibility and parametrizability, forbidden patterns are applicable to
a wider class of TRSs than traditional methods. In particular,
position-based strategies and context-sensitive rewriting occur as
special cases 
of such patterns.  

For the TRSs in Examples \ref{ex2nd} and \ref{ex_app} nice operational
behaviours can be achieved by using rewriting with forbidden patterns.
The restricted reduction relation induced by the forbidden patterns
is terminating while still being powerful enough to compute (head-)
normal forms. 
When using simpler approaches such as position-based strategies or
context-sensitive rewriting in these examples, such operational
properties cannot be achieved.
For instance, consider Example \ref{ex2nd}. There is an infinite
reduction sequence starting from $\inff(x)$ with the property that
every term has exactly one redex. Thus, non-termination is preserved
under any reduction strategy (as strategies do not introduce new normal
forms by definition). On the other hand, in order to avoid this
infinite sequence using context-sensitive rewriting, we must set
$2 \not\in \mu(:)$ (regardless of any additional reduction strategy). 
But in this case $\rightarrow_{\mu}$ does
not compute head-normal forms.

In \cite{ppdp01-lucas} \emph{on-demand rewriting} was introduced, which
is able to properly deal with the TRS of Example \ref{ex2nd}. This
means that with the on-demand rewriting the reduction relation induced
by the TRS of Example \ref{ex2nd} can be restricted in a way such that
it becomes 
terminating while still normal forms w.r.t.\ the restricted relation are
head-normal forms w.r.t.\ the unrestricted one. Indeed, Example
\ref{ex2nd} 
was the main motivating example for the introduction of on-demand rewriting
in \cite{ppdp01-lucas}.

However, for Example \ref{ex_app} we get that by restricting rewriting
by the proposed forbidden patterns we obtain a terminating relation that
is able to compute the normal forms of all well-formed ground terms. As the system
is 
orthogonal, any outermost-fair reduction strategy, e.g.\ parallel
outermost, is normalizing. Yet, by using such a strategy the
relation still 
remains non-terminating. 
In particular, our forbidden patterns
approach yields an effective procedure for deciding whether
a ground term is normalizing or not (it is not normalizing if its
$\rightarrow_{\Pi}$-normal form is not an $\rightarrow$-normal
form) for this example. 

On the other hand, by using context-sensitive
rewriting, termination can only be obtained if 
$2 \not\in \mu(:)$ which in
turn implies that 
the term $0 : \app(\nil, \nil)$ cannot be normalized despite having
a normal form $0 : \nil$.

For Examples \ref{ex2nd} and \ref{ex_app} effective strategies like
parallel outermost or $\mathcal{S}_{\omega}$ of \cite{middeldorp}
are normalizing (though under either strategy there are still
infinite derivations). We provide another example for which
these strategies fail to provide normalization while
the use of appropriate forbidden patterns yields normalization
(and termination) 

\begin{example}
\label{parout}
Consider the TRS $\mathcal{R}$ consisting of the following rules
\begin{equation}
 \nonumber\begin{tabular}[b]{r@{ $\rightarrow$ }l@{\;\;\;\;\;\;}r@{ $\rightarrow$ }l@{\;\;\;\;\;\;}r@{ $\rightarrow$ }l}
    $a$ & $b$ & $b$ & $a$ & $c$ & $c$ \\
    $g(x, x)$ & $d$ & $f(b, x)$ & $d$
\end{tabular}
\end{equation}
Using a parallel outermost strategy the term $g(a, b)$ is not
reduced to its (unique) normal form $d$. Using $\mathcal{S}_{\omega}$,
$f(a, c)$ is not reduced to its (unique) normal form $d$.

However, it is easy to see that when using a $\Pi = \{\langle c, \epsilon, h\rangle, 
\langle b, \epsilon, h\rangle\}$, $\rightarrow_{\Pi}$ is terminating
and all $\mathcal{R}$-normal forms can be computed.
\end{example}

Note however, that the forbidden patterns used in Example \ref{parout}
are not canonical. Thus it is not clear how to come up with
such patterns automatically. 

We argued that for our forbidden pattern approach it is crucial to identify reasonable
classes of patterns that provide trade-offs between practical feasibility,
simplicity and power, favoring either component to a certain degree.
We have sketched and illustrated two approaches to deal with
the issues of verifying termination and guaranteeing that it is possible to
compute useful results (in our case original head-normal forms) with
the restricted rewrite 
relation.
To this end we proposed a transformation from rewrite systems with forbidden patterns
to ordinary rewrite systems and showed that ground termination of both induced
reduction relations coincide.
Moreover, we provided a criterion based on canonical rewriting with forbidden
patterns to ensure that normal forms w.r.t.\ the restricted reduction relation
are original head-normal forms.

In particular ``here''-patterns seem interesting as their use avoids context restrictions
to be \emph{non-local}. That is to say that whether a position is
allowed for reduction or not depends only on a restricted ``area'' around
the position in question regardless of the actual size of the whole
object term. Note that this is not true for ordinary context-sensitive
rewriting 
and has led to various complications in the theoretical analysis
(cf.~e.g.~
\cite[Definition 23]{jfp04-giesl-middeldorp}
\cite[Definition 7]{lpar08-alarcon-et-al} and
\cite[Definitions 1-3]{rta06-gramlich-lucas}).

Regarding future work, among many interesting questions and problems
one particularly important aspect is to identify conditions and
methods for the automatic (or at least automatically supported)
synthesis of appropriate forbidden pattern restrictions. 

~\\[+1ex]
\textbf{Acknowledgements}: We are grateful to the anonymous referees 
for numerous helpful and detailed comments and criticisms.

\bibliographystyle{abbrv}

\end{document}